\newtheorem{thm}{Theorem}
\newtheorem{lem}[thm]{Lemma}
\newtheorem{prop}[thm]{Proposition}
\theoremstyle{definition}
\newtheorem{defn}[thm]{Definition}
\newtheorem{rmk}[thm]{Remark}
\numberwithin{equation}{section}
\numberwithin{thm}{section}
\newcommand{\B}{\mathcal{B}}
\newcommand{\C}{\mathcal{C}}
\newcommand{\Cb}{\mathbb{C}}
\newcommand{\D}{\mathcal{D}}
\newcommand{\F}{\mathcal{F}}
\newcommand{\N}{\mathbb{N}}
\newcommand{\R}{\mathbb{R}}
\renewcommand{\S}{\mathcal{S}}
\newcommand{\T}{\mathcal{T}}
\newcommand{\I}{\mathcal{I}}
\newcommand{\U}{\mathcal{U}}
\newcommand{\Z}{\mathbb{Z}}
\newcommand{\p}{\partial}
\renewcommand{\epsilon}{\varepsilon}
\newcommand{\dx}{\: \mathrm{d}}
\renewcommand{\u}{\mathbf{u}}
\newcommand{\uf}{\mathfrak{u}}
\newcommand{\Bf}{\mathfrak{B}}
\newcommand{\Cf}{\mathfrak{C}}
\newcommand{\Hf}{\mathfrak{H}}
\newcommand{\ie}{\textit{i.e.}}
\newcommand{\nm}{\noalign{\smallskip}}
\newcommand{\ds}{\displaystyle}
\newcommand{\iu}{\mathrm{i}\mkern1mu}
\newcommand{\Dc}{\mathcal{D}}
\newcommand{\svdots}{\raisebox{3pt}{\scalebox{.6}{$\vdots$}}}
\newcommand{\sddots}{\raisebox{3pt}{\scalebox{.6}{$\ddots$}}}
\newcommand{\sadots}{\raisebox{3pt}{\scalebox{.6}{$\adots$}}}
\newcommand{\neutralize}[1]{\expandafter\let\csname c@#1\endcsname\count@}
\title{Anderson localization in the subwavelength regime}
\author{
	Habib Ammari\thanks{\footnotesize Department of Mathematics,
		ETH Z\"urich, Z\"urich, Switzerland (habib.ammari@math.ethz.ch).}\and Bryn Davies\thanks{\footnotesize Department of Mathematics, Imperial College London, London, UK (bryn.davies@imperial.ac.uk).} \and Erik Orvehed Hiltunen\thanks{\footnotesize Department of Mathematics, Yale University, New Haven, USA (erik.hiltunen@yale.edu).}}
\date{}
\begin{document}
	\maketitle

	\begin{abstract}
	In this paper, we use recent breakthroughs in the study of coupled subwavelength resonator systems to reveal new insight into the mechanisms responsible for the fundamental features of Anderson localization. The occurrence strong localization in random media has proved difficult to understand, particularly in physically derived multi-dimensional models and systems with long-range interactions. We show here that the scattering of time-harmonic waves by high-contrast resonators with randomly chosen material parameters reproduces the characteristic features of Anderson localization. In particular, we show that the hybridization of subwavelength resonant modes is responsible for both the repulsion of energy levels as well as the widely observed phase transition, at which point eigenmode symmetries swap and very strong localization is possible. We derive results from first principles, using asymptotic expansions in terms of the material contrast parameter and obtain a characterization of the localized modes in terms of generalized capacitance matrices. This model captures the long-range interactions of the wave-scattering system and provides a concise framework to explain the exotic phenomena that are observed.
	\end{abstract}
\vspace{0.5cm}
	\noindent{\textbf{Mathematics Subject Classification (MSC2010):} 35J05, 35C20, 35P20, 78A48.

\vspace{0.2cm}

	\noindent{\textbf{Keywords:}} disordered systems, subwavelength resonance, phase transition, level repulsion, high-contrast metamaterials, asymptotic analysis.
\vspace{0.5cm}

	\section{Introduction} 
	
	As waves propagate through our environment they often encounter materials with random structures. These random media might be organic substances, fluids, or structured materials that have been subjected to random perturbations. As a result, there is a large field concerned with the propagation of waves in random media, with lots of attention being devoted to the eye-catching phenomenon of wave localization \cite{ishimaru1978wave, fouque2007wave}. The seminal work in this area was the observation by Philip Anderson in 1958 that random media could be free from diffusion \cite{anderson1958absence}. His work concerned the transport of electrons in random lattices, but has inspired studies in many different wave regimes \cite{lagendijk2009fifty, segev2013anderson, billy2008direct, filoche2012universal, crane2017anderson, carmona1982exponential, davies2023landscape, torres2019level}.

	Wave localization is typically constrained by the diffraction limit. Anderson localization is generally not an exception to this and the localization length is typically limited by the wavelength, meaning it diverges in the low-frequency regime. As a result, in this classical setting, inhomogeneities on length-scales smaller than the wavelength will have a negligible effect on wave scattering and wave transport is characterized by effective parameters where any small disorder can be neglected. However, recent developments of systems exhibiting \emph{subwavelength} wave localization are able to overcome this limit \cite{ammari2021functional, sheinfux2017observation, herzig2016interplay, minnaert1933musical}. Crucially, this is achieved in \emph{locally resonant} materials, whose constituents exhibit resonance at small (\textit{i.e.} \emph{subwavelength}) frequencies. In particular, \emph{high-contrast} resonators are a natural example of subwavelength resonance (see \textit{e.g.} \cite{ammari2021functional} for a comprehensive review of related phenomena; a famous example is the Minnaert resonance of air bubbles in water \cite{minnaert1933musical}). 
	
	While random systems of subwavelength resonators are challenging to understand, the spectra of periodic systems are much more straightforward. A periodic crystal of $N$ repeated high-contrast resonators is known to have $N$ subwavelength resonant band functions. A band gap is guaranteed to exist above these bands and other gaps can exist between them, depending on the choice of geometry \cite{ammari2021functional}. Perturbations can create eigenmodes with eigenfrequencies within these band gaps. These `midgap' modes are elements of the point spectrum and cannot couple with the Bloch modes so decay exponentially quickly away from the defect. Some examples of localized modes, for a planar square lattice of high-contrast three-dimensional resonators with randomly chosen material parameters are shown in \Cref{fig:modesA}. We can derive concise formulas for the the midgap frequencies that exist when there are finitely many random perturbations and can describe the localized modes of a fully random structure by increasing the number of perturbations.
	
	The behaviour of the randomly perturbed systems studied here can be understood by considering the simple phenomena of \emph{hybridization} and \emph{level repulsion}. That is, when two eigenmodes are coupled, one of the eigenfrequencies of the resulting hybridized modes will be shifted up while the other will be shifted down. Meanwhile, level repulsion means that when disorder is added to the entries of a matrix, the eigenvalues have a tendency to separate \cite{mehta2004random}. These can be used to explain, firstly, the observation that the stronger the disorder, the more localized modes are created, as well as the tendency for the eigenmodes to become more strongly localized as the strength of the disorder increases.	The most commonly studied random matrices are Gaussian ensembles, for which the distributions of eigenvalue separations are well known \cite{mehta2004random}. In our setting, however, we have a non-linear eigenvalue problem so are not able to use this theory. As a result, we will elect to study the simple case of uniformly distributed perturbations (however, we consistently noticed qualitatively very similar results to the case of Gaussian perturbations).
			
	The phase transition whereby disordered media changes from being conducting (exhibiting only weak localization) to being insulating (the absence of diffusion) is referred to as the \emph{Anderson transition} \cite{anderson1958absence, lagendijk2009fifty}. In this work, we similarly observe a phase transition at a positive value of the perturbation strength. At this transition, eigenfrequencies become degenerate and the symmetries of the corresponding modes swap (\emph{i.e.}, even modes change to odd modes, and vice versa). In general, when eigenmodes hybridize, the volume of their support increases, decreasing the extent to which they can be said to be localized. This trend, however, breaks down at the phase transition. At this point, any small perturbation will lift the degeneracy and cause the modes to decouple, producing strongly localized modes. A sharp peak in the average degree of localization is observed here, which is a key signature of the transition.
	
	The analysis in this work is based on the \emph{generalized capacitance matrix}, which is a powerful tool for characterising the subwavelength resonant modes of a system of high-contrast resonators \cite{ammari2021functional}. In essence, the capacitance formulation provides a discrete approximation to the continuous spectral problem of the PDE-model, valid in the high-contrast asymptotic limit. This approximation is based solely on first principles and provides a natural starting point for both theoretical analysis and numerical simulation of Anderson localization. The inclusion of \emph{long-range interactions} is a crucial feature of the capacitance matrix: the off-diagonal entries $C_{ij}$ decay slowly, according to $1/|i-j|$. In contrast, previous studies of Anderson localization in quantum-mechanical and Hamiltonian systems are mainly limited to the case of short-range interactions $1/|i-j|^{d+\epsilon}$, where $d$ is the dimension and $\epsilon > 0$ \cite{anderson1958absence, figotin1994localization, aizenman1993localization}. Since Anderson's original work \cite{anderson1958absence}, localization was believed to only occur in systems with short-range interactions but, more recently, localization has been demonstrated in systems with long-range interactions \cite{nandkishore2017many, jenkins2018strong}. Nevertheless, the long-range interactions and the dense random perturbations in the setting studied here make the problem considerably more challenging than Anderson's original model, and precise statements on the spectrum of the discrete model are generally beyond reach.
	
 	Anderson localization has been studied in other systems related to the current setting, most notably for classical waves in \cite{figotin1994localization}. However, the analysis is typically based on approximating the continuous (differential) equation by a discrete (difference) equation. In contrast, the novelty of our approach is that the discrete capacitance approximation studied in this work is derived from first principles and is a natural approximation based on the physical properties of the system. It is important to emphasize that tight-binding and nearest-neighbour formulations are not applicable in the setting of high-contrast resonators and give poor approximations \cite{frohlich1983absence, frohlich1985constructive, ammari2021validity}.
 	
	The analysis in this work deepens the understanding of localization in systems with long-range interactions. We begin by describing the Helmholtz partial differential model of interest in \Cref{sec:setup} and derive the corresponding generalized capacitance formulation in \Cref{sec:cap}. Within this framework, the Helmholtz problem is approximated by a spectral problem for a discrete operator, where the random disorder is manifested through multiplication by a diagonal operator. The capacitance formulation leads to a concise characterization of localized modes in the case of finitely many defects in an infinite structure, whereby the problem reduces to a nonlinear eigenvalue problem for a Toeplitz matrix. This is presented in \Cref{sec:defect} and allows for a detailed study of level repulsion and phase transition, providing insight into the mechanism of Anderson localization in systems with long-range interactions. Additionally, we study the limit when the number of defects increases, demonstrating an increasing number of localized modes. Finally, in \Cref{sec:fullyrandom} we consider the fully random case, \textit{i.e.} when all resonators in a finite array are taken to have random parameters. As observed in discrete systems with short-range interactions (see, \textit{e.g.}, \cite{figotin1994localization}), the band structure of the unperturbed case persists, while localized modes emerge around the edges of the bands. As the perturbation increases, the level repulsion will cause the localized frequencies to separate and increase the degree of localization.

	\begin{figure}
		\begin{center}
			\includegraphics[width=0.7\linewidth]{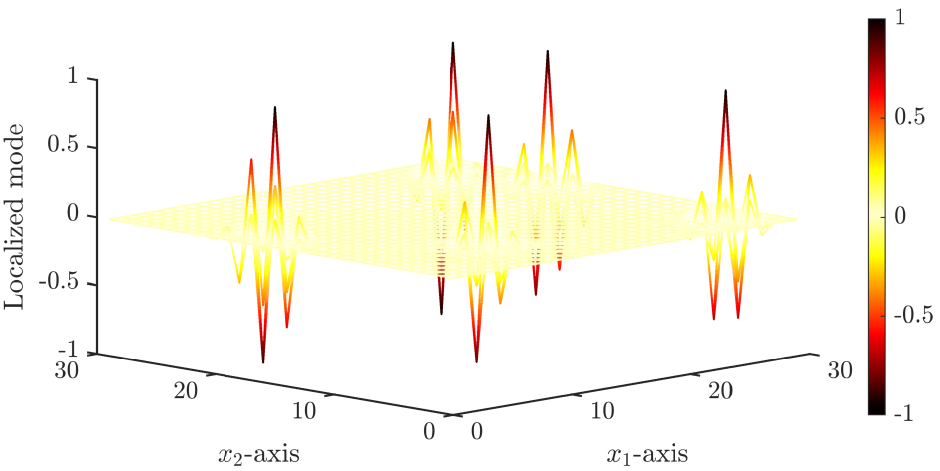}
		\end{center}
		\caption{Examples of localized states in a two-dimensional, finite, square lattice of resonators. We show five different localized eigenmodes (superimposed on one another) corresponding to a system of $N=900$ resonators whose material parameters are perturbed randomly according to a uniform distribution with standard deviation $\sigma = 0.05$; see \Cref{sec:fullyrandom} for details of the setup.} \label{fig:modesA}
	\end{figure}

	\section{Formulation of the resonance problem} \label{sec:setup}
	
	\begin{figure}
		\begin{subfigure}{\linewidth}
		\centering
				\includegraphics[width=0.8\linewidth]{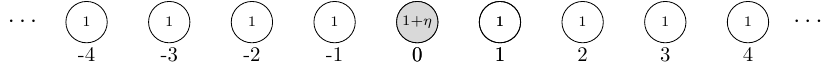}
				\caption{} \label{fig:arraysketchone}
		\end{subfigure}
		
		\vspace{0.4cm}
		
		\begin{subfigure}{\linewidth}
		\centering
				\includegraphics[width=0.92\linewidth]{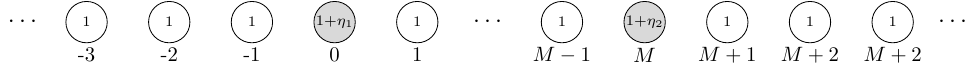}
				\caption{} \label{fig:arraysketchtwo}
		\end{subfigure}	
		
		\vspace{0.4cm}
		
		\begin{subfigure}{\linewidth}
		\centering
				\includegraphics[width=0.88\linewidth]{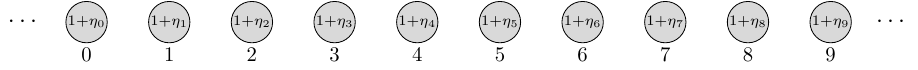}
				\caption{} \label{fig:arraysketchfull}
		\end{subfigure}	
			
		\vspace{0.4cm}
		
		\begin{subfigure}{\linewidth}
		\centering
				\includegraphics[width=0.88\linewidth]{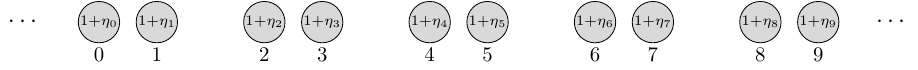}
				\caption{} \label{fig:arraysketchdimers}
		\end{subfigure}	
		\caption{Arrays of resonators with (a) a single defect, on the 0\textsuperscript{th} resonator, (b) two defects, on the 0\textsuperscript{th} and the $M$\textsuperscript{th} resonators, (c) defects on every resonator, and (d) defects on every resonator in an array of dimers. In each case, the material parameters on defective resonators are perturbed $1\mapsto1+\eta_i$ for some $\eta_i\in(-1,\infty)$. Here, $\eta_i$ are independent of the high-contrast parameter $\delta$. We depict circular resonators here, but the theoretical results in this work hold for any shape with H\"older continuous boundary. The main idea behind this work is to conduct analytic studies of structures (a) and (b) to make predictions about the behaviour of the fully random systems (c) and (d).}
		\label{fig:arraysketch}
	\end{figure}

	We assume that the resonator system consists of periodically arranged scatterers, whose material parameters may be non-periodic. We work in $d$ spatial dimensions, where $d$ can be either 2 or 3 (Anderson localization in one-dimensional systems is a very different and somewhat more straightforward phenomenon). We consider a lattice $\Lambda$ of dimension $1\leq d_l\leq d$, generated by the lattice vectors $l_1,..., l_{d_l}$:
	$$\Lambda := \left\{ m_1 l_1+...+m_{d_l} l_{d_l} ~|~ m_i \in \Z \right\}. $$
	
	We let $P_l: \R^d \to \R^{d_l}$ be the projection onto the first $d_l$ coordinates, and $P_\perp: \R^d \to \R^{d-d_l}$ be the projection onto the last $d-d_l$ coordinates. For simplicity, we assume that $P_\perp l_i = 0$, which means that the lattice is aligned with the first $d_l$ coordinate axes. For a point $x\in \R^d$, we use the notation $x = (x_l,x_0)$, where $x_l =P_lx$ and $x_0 = P_\perp x$.
	We let $Y \subset \R^{d}$ be a fundamental domain of the given lattice:
	$$ Y := \left\{ c_1 l_1+...+c_{d_l} l_{d_l} ~|~ 0 \le c_1,...,c_{d_l} \le 1 \right\}. $$
	Inside $Y$, we assume that there are $N$ scatterers, or \emph{resonators}, $D_i\subset Y$, for $i=1,...,N$. These are disjoint, connected domains with boundaries in $C^{1,s}$ for some $0<s<1$.	For $m\in \Lambda$, we let $D_i^m$ denote the translated resonator, and we let $\D$ denote the total collection of resonators:
	$$\D = \bigcup_{m\in \Lambda}\bigcup_{i\in\{1,\dots, N\}} D_i^m, \qquad D_i^m = D_i+m.$$
	We let $v_i^m$ denote the wave speed in $D_i^m$, and $v$ denote the wave speed in the surrounding medium. Each resonator is additionally associated to a parameter $\delta_i^m$, which describes the contrast between the resonator and the surrounding.
		
	Throughout this work, we will consider the equation
	\begin{equation} \label{eq:scattering}
		\left\{
		\begin{array} {ll}
			\ds \Delta {u}+ \frac{\omega^2}{v^2} {u}  = 0 & \text{in } \R^d \setminus \Dc, \\[0.3em]
			\ds \Delta {u}+ \frac{\omega^2}{(v_i^m)^2} {u}  = 0 & \text{in } D_i^m, \ i=1,\dots,N, \\
			\nm
			\ds  {u}|_{+} -{u}|_{-}  = 0  & \text{on } \partial \Dc, \\
			\nm
			\ds  \delta_i^m \frac{\partial {u}}{\partial \nu} \bigg|_{+} - \frac{\partial {u}}{\partial \nu} \bigg|_{-} = 0 & \text{on } \partial D_i^m, \ i=1,\dots,N, \\
			\nm
			\ds u(x_l,x_0) & \text{satisfies the outgoing radiation condition as }  |x_0| \rightarrow \infty.
		\end{array}
		\right.
	\end{equation}
	
	The natural setting to achieve subwavelength resonance is to consider the \emph{high-contrast} case  \cite{ammari2021functional}, whereby we assume that 
	$$\delta_i^m = O(\delta)$$
	for some small parameter $\delta \to 0$. Within this high-contrast regime, we are interested in \emph{subwavelength} solutions to \eqref{eq:scattering}. These are defined as non-trivial solutions for which the frequency $\omega=\omega(\delta)$ depends continuously on $\delta$ and has the asymptotic behaviour
	\begin{equation} \label{subwave}
	\omega\to0\quad\text{as}\quad\delta\to0.
	\end{equation}
	This asymptotic condition is the notion of the ``subwavelength'' regime that we consider in this work.

 	The dual lattice of $\Lambda$, denoted $\Lambda^*$, is generated by $\alpha_1,...,\alpha_{d_l}$ satisfying $ \alpha_i\cdot l_j = 2\pi \delta_{ij}$ and $P_\perp \alpha_i = 0$,  for $i,j = 1,...,d_l.$ The \emph{Brillouin zone} $Y^*$ is defined as $Y^*:= \big(\R^{d_l}\times\{\mathbf{0}\}\big) / \Lambda^*$, where $\mathbf{0}$ is the zero-vector in $\R^{d-d_l}$. We remark that $Y^*$ can be written as $Y^*=Y^*_l\times\{\mathbf{0}\}$, where  $Y^*_l$ has the topology of a torus in $d_l$ dimensions.

	\section{Generalized capacitance matrices}\label{sec:cap}
	In this section, we introduce the capacitance matrix formalism, which characterizes the subwavelength spectrum of \eqref{eq:scattering} in terms of a discrete operator. In its simplest form, the generalized quasiperiodic capacitance matrix $\widehat{\C}^\alpha$ gives a concise characterization of the periodic problem. We then generalize this framework to a general, possibly random, parameter distribution. Similar characterizations, in terms of the generalized capacitance matrix, have been used to study the subwavelength resonant modes of many different systems of high-contrast resonators, see \cite{ammari2021functional} for a review. In the case of a compact perturbation of a periodic system, this capacitance formalism reduces to an eigenvalue problem for a Toeplitz matrix.

	\subsection{Layer potentials and the quasiperiodic capacitance matrix}
	In the case of a periodic system, the spectrum of \eqref{eq:scattering} is well-known to consist of a countable sequence of continuous spectral bands. These bands consist of propagating ``Bloch'' modes that are not spatially localized. The general theory for periodic elliptic operators is reviewed in \cite{kuchment2016overview} and presented for the specific case of periodic subwavelength resonators in \cite{ammari2021functional}. In this section, we briefly summarize the main results. We assume that the material parameters satisfy
	$$\delta_i^m = \delta_i, \qquad v_i^m = v_i, \qquad \text{for all } m \in \Lambda,$$
	for some constants $\delta_i, v_i$, $i=1,...,N$.
	
	The capacitance matrix describes the band functions in the subwavelength limit, whereby the Helmholtz equation reduces to Laplace's equation. To define this matrix, we begin by letting $G(x)$ be the Green's function for Laplace's equation in $d=2$ or $d=3$:
	\begin{equation} \label{eq:G}
		G(x)=
		\left\{
		\begin{array}{l l}
			\ds	-\frac{1}{2\pi} \log(|x|), & d=2,\\
			\ds -\frac{1}{4\pi|x|}, & d=3,
		\end{array}\right.
		\qquad x\neq0.
	\end{equation}
	For $\alpha \in  Y\setminus \{0\}$, we can then define the quasiperiodic Green's function $G^{\alpha}(x)$ as 
	\begin{equation}\label{eq:xrep}
		G^{\alpha}(x) := \sum_{m \in \Lambda} G(x-m)e^{\iu \alpha \cdot m}.
	\end{equation}
	The series in \eqref{eq:xrep} converges uniformly for $x$ and $y$ in compact sets of $\R^d$, $x\neq y$ and $\alpha \neq 0$.  The quasiperiodic single layer potential  $\mathcal{S}_D^{\alpha}$ is defined as
	$$\mathcal{S}_D^{\alpha}[\varphi](x) := \int_{\partial D} G^{\alpha} (x-y) \varphi(y) \dx\sigma(y),\quad x\in \mathbb{R}^d.$$
	Restricting to $x\in \p D$, $\S_D^\alpha$ is an invertible operator $\S_D^\alpha: L^2(\p D) \to H^1(\p D)$ for $\alpha \neq 0$ (see, for example, \cite{ammari2021functional} for a review of layer potential operators with applications to subwavelength physics}).
	\begin{defn}[Quasiperiodic capacitance matrix] \label{defn:QCM}
		Assume $\alpha \neq 0$. For a system of $N\in\N$ resonators $D_1,\dots,D_N$ in $Y$ we define the quasiperiodic capacitance matrix $\widehat{C}^\alpha=(\widehat{C}^\alpha_{ij})\in\mathbb{C}^{N\times N}$ to be the square matrix given by
		\begin{equation*}
			\widehat{C}^\alpha_{ij}=-\int_{\p D_i} (\S_D^{\alpha})^{-1}[\chi_{\p D_j}] \dx \sigma,\quad i,j=1,\dots,N.
		\end{equation*}
	\end{defn}
	At $\alpha = 0$, we can define $\widehat{C}^0$ to be the limit of $\widehat{C}^\alpha$ as $\alpha \to 0$, which is known to exist \cite{ammari2021functional}. Having defined the \emph{capacitance} matrix, we now introduce the \emph{generalized} capacitance matrix, which allows us to account for the material parameters of the resonators.
	
	\begin{defn}[Generalized quasiperiodic capacitance matrix] \label{defn:GQCM}
		For a system of $N\in\N$ resonators $D_1,\dots,D_N$ in $Y$ we define the generalized quasiperiodic capacitance matrix, denoted by $\widehat{\C}^\alpha=(\widehat{\C}^\alpha_{ij})\in\mathbb{C}^{N\times N}$, to be the square matrix given by
		\begin{equation*}
			\widehat{\C}^\alpha_{ij}=\frac{\delta_i v_i^2}{|D_i|} \widehat{C}^\alpha_{ij}, \quad i,j=1,\dots,N.
		\end{equation*}
	\end{defn}

	Since \eqref{eq:scattering} is periodic, it is well-known that the spectrum consists of continuous bands parametrized by $\alpha$. For small $\delta$, there are $N$ such bands $\omega_i(\alpha)$, $i=1,...,N$ satisfying $\omega_i(\alpha) = O(\delta^{1/2})$. We have the following result \cite{ammari2021functional}.
	\begin{thm} \label{thm:res_periodic}
		Let $d\in \{2,3\}$ and $0<d_l\leq d$. Consider a system of $N$ subwavelength resonators in the periodic unit cell $Y$, and assume $|\alpha| > c > 0$ for some constant $c$ independent of $\omega$ and $\delta$. As $\delta\to0$, there are $N$ subwavelength resonant frequencies, which satisfy the asymptotic formula
		\begin{equation*}
			\omega_n^\alpha = \sqrt{\lambda_n^\alpha}+O(\delta^{3/2}), \quad n=1,\dots,N.
		\end{equation*}
		Here, $\{\lambda_n^\alpha: n=1,\dots,N\}$ are the eigenvalues of the generalized quasiperiodic capacitance matrix $\widehat{\C}^\alpha\in\mathbb{C}^{N\times N}$, which satisfy $\lambda_n^\alpha=O(\delta)$ as $\delta\to0$.
	\end{thm}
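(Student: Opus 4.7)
The strategy follows the layer-potential and asymptotic-expansion framework developed for high-contrast subwavelength systems in \cite{ammari2021functional}. I would represent the Bloch eigenfunction by quasiperiodic Helmholtz single-layer potentials with densities $\psi_i$ on $\p D_i$ and an outer density $\psi$:
\begin{equation*}
u(x) = \S_D^{k_i,\alpha}[\psi_i](x) \text{ in } D_i, \qquad u(x) = \S_D^{k,\alpha}[\psi](x) \text{ in } Y\setminus\overline{\D},
\end{equation*}
with $k=\omega/v$ and $k_i=\omega/v_i$. Using the quasiperiodic Green's function automatically takes care of the quasiperiodicity and the outgoing radiation condition. Imposing continuity of $u$ across each $\p D_i$ together with the transmission condition $\delta_i\,\p u/\p\nu|_+ = \p u/\p\nu|_-$, and applying the standard jump formulas for the single layer, produces a coupled boundary-integral system whose characteristic equation determines $\omega$.

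Next I would use the low-frequency expansions $\S_D^{k,\alpha} = \S_D^\alpha + O(k^2)$ and $(\K_D^{k,\alpha})^* = (\K_D^\alpha)^* + O(k^2)$. The hypothesis $|\alpha|>c>0$ is essential at this step: it guarantees that $\S_D^\alpha$ is invertible and that the expansions are uniform in $\alpha$, avoiding the $\Gamma$-point singularity of $G^\alpha$ as $\alpha\to 0$. A Fredholm/perturbation argument on the resulting near-static system then produces exactly $N$ subwavelength resonant branches with $\omega=O(\delta^{1/2})$, and shows that the associated eigenmodes are constant on each $D_i$ to leading order, say $u\approx a_i$ on $D_i$.

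The reduction to the generalized capacitance matrix follows from an averaging argument. Integrating the transmission condition over $\p D_i$ and applying the divergence theorem inside $D_i$,
\begin{equation*}
\delta_i\int_{\p D_i}\frac{\p u}{\p\nu}\bigg|_+\dx\sigma \;=\; \int_{\p D_i}\frac{\p u}{\p\nu}\bigg|_-\dx\sigma \;=\; -\frac{\omega^2}{v_i^2}\int_{D_i} u\dx x \;\approx\; -\frac{\omega^2}{v_i^2}|D_i|\,a_i.
\end{equation*}
On the exterior side, the leading-order outer density satisfies $\S_D^\alpha[\psi]=\sum_j a_j\chi_{\p D_j}$, so by \Cref{defn:QCM}, $\int_{\p D_i}\psi\dx\sigma = -\sum_j C^\alpha_{ij}\,a_j$. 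Equating the two expressions for $\int_{\p D_i}\p u/\p\nu|_+\dx\sigma$ and using \Cref{defn:GQCM} yields $\C^\alpha\mathbf{a} = \omega^2\mathbf{a}$, identifying the leading eigenvalues as $\lambda_n^\alpha$.

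The main obstacle is the rigorous error analysis producing the $O(\delta^{3/2})$ remainder. One must show that the $O(k^2)=O(\delta)$ corrections to the layer potentials, together with the $O(\delta)$ deviation of $u$ from being constant on each $D_i$, contribute to $\omega^2$ only at order $\delta^2$, hence to $\omega$ at order $\delta^{3/2}$. I would establish this by substituting an ansatz $\omega^2 = \lambda_n^\alpha + \delta\,\mu_n^{(1)} + \cdots$ into the full boundary integral system and performing a Lyapunov--Schmidt reduction on the orthogonal complement of the leading kernel, checking that all remainder terms are bounded uniformly in $\alpha$ on $\{|\alpha|>c\}$. The uniform invertibility of $\S_D^\alpha$ on this set, again a consequence of the assumption on $\alpha$, is what makes this last step go through; the analogous bookkeeping has been carried out in \cite{ammari2021functional} and yields the stated bound.
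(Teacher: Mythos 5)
The paper does not actually prove \Cref{thm:res_periodic}: it is quoted as a known result with a pointer to \cite{ammari2021functional}, so there is no in-paper argument to compare against. Your sketch reproduces the standard proof from that reference --- quasiperiodic layer-potential representation, low-frequency expansion of $\S_D^{k,\alpha}$, the averaging/divergence-theorem step that produces $\C^\alpha\mathbf{a}=\omega^2\mathbf{a}$, and the use of $|\alpha|>c>0$ precisely where the invertibility of $\S_D^\alpha$ and uniformity of the expansions are needed --- so the route is the right one. The only step you treat too lightly is the claim that there are \emph{exactly} $N$ subwavelength resonant frequencies: in the cited work this counting is done via Gohberg--Sigal theory (a generalized Rouch\'e/argument principle for finitely meromorphic operator-valued functions applied to the boundary-integral operator $\mathcal{A}(\omega,\delta)$), whereas a generic ``Fredholm/perturbation argument'' or a Lyapunov--Schmidt reduction alone locates branches near each $\sqrt{\lambda_n^\alpha}$ but does not by itself rule out additional subwavelength characteristic values; you would need to supply that counting argument to make the existence statement complete.
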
 
	The spectrum of the unperturbed periodic array consists of a countable collection of spectral bands $\omega^\alpha$, which depend continuously on $\alpha\in Y^*$. The bands which are subwavelength, in the sense of the asymptotic condition \eqref{subwave}, are described by the eigenvalues of the generalized quasiperiodic capacitance matrix $\widehat{\C}^\alpha$ thanks to \Cref{thm:res_periodic}. These bands consist of propagating ``Bloch'' modes and might have band gaps between them.  The introduction of (random) perturbations to the model will, in some cases, create localized modes with resonant frequencies lying in one of the band gaps. One advantage of the subwavelength regime we study is that there will always be a band gap above the top subwavelength spectral band, provided $\delta$ is sufficiently small. See \cite{ammari2017BGopening} for a detailed exposition of this phenomenon. Many of the examples considered in this work are of this type, whereby localized modes are perturbed away from the top of the subwavelength resonant spectrum due to the introduction of defects. We will also consider structures that have band gaps between the subwavelength bands, such as the dimer array depicted in \Cref{fig:arraysketchdimers}.
	
	In this work, we consistently use the superscript $\ \widehat{} \ $ to denote quantities posed in terms of the dual variable $\alpha\in Y^*$. In subsequent sections, we will re-frame the capacitance formulation to the real-space variable $m \in \Lambda$. We introduce the Floquet transform $\F: \ \bigr(\ell^2(\Lambda)\bigr)^N \to \left(L^2(Y^*)\right)^N$ and its inverse $\I: \ \left(L^2(Y^*)\right)^N \to \bigr(\ell^2(\Lambda)\bigr)^N $, which are given by
	$$\F[\phi](\alpha) := \sum_{m\in \Lambda} \phi(m)e^{\iu\alpha\cdot m}, \qquad  \I[\psi](m) := \frac{1}{|Y^*|}\int_{Y^*} \psi(\alpha) e^{-\iu \alpha \cdot m} \dx \alpha.$$
	We can now define the real-space capacitance coefficients and generalized capacitance coefficients, respectively, as
	$$C^m = \I[\widehat{C}^\alpha](m), \qquad \C^m = \I[\widehat{\C}^\alpha](m),$$	
	indexed by the real-space variable $m\in \Lambda$.

	\subsection{Discrete-operator formulation}
	The goal, now, is to derive a capacitance formulation which is valid even in the case when the material parameters are not distributed periodically. 
	
	We seek solutions $u$ to \eqref{eq:scattering} which are localized and in the subwavelength regime, as defined by the asymptotic condition \eqref{subwave}. It has been shown previously that for a resonant mode $u$ to exist within this regime, it must have an eigenfrequency $\omega$ with the asymptotic behaviour $\omega=O(\sqrt{\delta})$ \cite{ammari2021functional}. Hence, we will write $\omega$ as
		\begin{equation}\label{eq:omega}
		\omega = \omega_0 + O(\delta), \quad \omega_0 = \beta \delta^{1/2},
	\end{equation}
	for some constant $\beta$ independent of $\delta$. Here, we refer to localization in the $L^2$-sense, meaning that modes are said to be localized if they decay sufficiently quickly such that
	$$
	\int_{\R^{d_l}} |u(x_l,x_0)|^2 \dx x_l<\infty,
	$$
	for any $x_0\in \R^{d-d_l}$. For simplicity, we also assume that $u$ is normalized in the sense
	$$\int_{\R^{d_l}} |u(x_l,0)|^2 \dx x_l=1.$$ 
	The crucial property of subwavelength solutions is that they are approximately constant as functions of $x$ inside the resonators $D_i^m$; such solutions arise as perturbations of the constant solution of the interior Neumann eigenvalue problem of $-\Delta$ inside $D_i^m$. Specifically, we have the following result from \cite{pt_topological}.
	\begin{lem}\label{lem:constant}
		Let $u$ be a localized, normalized eigenmode to \eqref{eq:scattering} corresponding to an eigenvalue $\omega$ which satisfies $\omega = O(\delta^{1/2})$ as $\delta \to 0$. Then, uniformly in $x\in \D$,
		$$u(x) = u_i^m + O(\delta^{1/2}), \quad x\in D_i^m, i=1,...,N, m\in \Lambda, $$
		for some $u_i^m$ which are constant with respect to position $x$ (within each disjoint $D_i^m$) and material contrast $\delta$.
	\end{lem}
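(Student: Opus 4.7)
The plan is to combine the high-contrast condition $\delta_i^m = O(\delta)$ with the subwavelength scaling $\omega^2 = O(\delta)$ to show that, on each resonator $D_i^m$, $u$ is approximately harmonic with approximately vanishing interior Neumann trace. Formally writing $u = u_0 + \delta^{1/2} u_1 + \cdots$, the leading-order problem reduces to Laplace's equation both inside and outside the resonators, while the transmission condition $\delta_i^m (\partial u/\partial\nu)|_+ = (\partial u/\partial\nu)|_-$ degenerates at $\delta = 0$ to the homogeneous interior Neumann condition $(\partial u_0/\partial\nu)|_- = 0$ on $\partial D_i^m$. Uniqueness for the interior Neumann problem then forces $u_0|_{D_i^m}$ to be a constant, which I would take as the desired $u_i^m$.

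To make this rigorous, I would apply Green's identity on each $D_i^m$: using $\Delta u = -(\omega^2/(v_i^m)^2) u$ and $(\partial u/\partial\nu)|_- = \delta_i^m (\partial u/\partial\nu)|_+$,
\begin{equation*}
\int_{D_i^m} |\nabla u|^2 \dx x = \delta_i^m \int_{\partial D_i^m} \bar u \, \frac{\partial u}{\partial\nu}\bigg|_+ \dx\sigma + \frac{\omega^2}{(v_i^m)^2} \int_{D_i^m} |u|^2 \dx x.
\end{equation*}
Assuming one can bound $(\partial u/\partial\nu)|_+$ in $H^{-1/2}(\partial D_i^m)$ uniformly in $(i,m,\delta)$ by the $L^2$-data of $u$, both terms on the right-hand side are $O(\delta)$. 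A Poincar\'e inequality then yields $\|u - \bar u_i^m\|_{L^2(D_i^m)} = O(\delta^{1/2})$ with $\bar u_i^m := |D_i^m|^{-1}\int_{D_i^m} u \dx x$, and interior elliptic regularity in $D_i^m$ (afforded by the $C^{1,s}$ boundary) upgrades this to the required pointwise estimate. The $\delta$-independent constant $u_i^m$ is identified as the leading-order term of $\bar u_i^m$ in its expansion in $\delta^{1/2}$.

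The step I expect to be the main obstacle is the uniform-in-$(i,m,\delta)$ bound on the exterior Neumann trace $(\partial u/\partial\nu)|_+$. Local estimates on a single resonator do not suffice, because in an infinite array the resonators couple through the quasiperiodic Green's function with only algebraic decay $1/|i-j|$, so the exterior problem must be inverted globally. My plan would be to represent $u$ outside $\D$ as a sum of single-layer potentials with densities on $\partial D_i^m$, apply the Floquet transform in the $x_l$-directions to fibre the problem over $Y^*$, and use the mapping properties of $\S_D^\alpha$ uniformly in $\alpha$ together with the $L^2$-normalization of $u$ to close the estimate.
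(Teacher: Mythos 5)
First, note that the paper does not actually prove this lemma: it is quoted verbatim from \cite{pt_topological}, and the argument there is carried out with boundary-integral operators rather than energy estimates. In that approach $u$ is represented by (quasiperiodic) single layer potentials, the operators $\mathcal{S}_D^{\omega/v}$ and the associated Neumann--Poincar\'e operator $\mathcal{K}_D^{*}$ are expanded in powers of $\omega$ and $\delta$, and the transmission conditions force the leading-order density to lie in $\ker(-\tfrac{1}{2}I+\mathcal{K}_D^{*})$; since the single layer potential maps that kernel to functions which are constant on each $D_i^m$, the constancy of $u$ at leading order follows at once, with uniformity over the infinite array handled by the Floquet fibration of the layer-potential operators. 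Your variational route (Green's identity, Poincar\'e, elliptic regularity) is genuinely different and more elementary in spirit, and the purely local part of it is sound.

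However, as you yourself flag, the proposal has a genuine gap exactly where the difficulty of the lemma lives: the uniform-in-$(i,m,\delta)$ bound on $(\partial u/\partial\nu)|_{+}$ in $H^{-1/2}(\partial D_i^m)$ is asserted as a plan, not proved. Without it, neither the $O(\delta)$ bound on the boundary term in your Green's identity nor the upgrade from $\|u-\bar u_i^m\|_{L^2(D_i^m)}=O(\delta^{1/2})$ to a pointwise estimate \emph{up to the boundary} goes through (interior regularity alone controls only compact subsets of $D_i^m$, whereas the lemma claims uniformity on all of $\D$, so you need the Neumann data anyway). Establishing that bound requires inverting the exterior problem globally on the infinite array with its slow $1/|i-j|$ coupling --- which is precisely what the layer-potential and Floquet machinery of the cited proof is built to do --- so in effect your plan defers the entire content of the lemma to an unproven estimate. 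Two further points need attention: (i) the normalization of $u$ is by slices, $\int_{\R^{d_l}}|u(x_l,0)|^2\dx x_l=1$, so you must justify that $\|u\|_{L^2(D_i^m)}$ and the Dirichlet traces are $O(1)$ uniformly in $m$ (this uses the assumed localization); and (ii) identifying a $\delta$-independent constant $u_i^m$ as the leading term of $\bar u_i^m$ presupposes that the eigenmode family admits an expansion in $\delta^{1/2}$, which is itself something the layer-potential construction provides rather than an a priori fact.
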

	We define $\u^m = \left(\begin{smallmatrix} u_1^m \\ \svdots \\ u_N^m\end{smallmatrix}\right)\in \R^N$.	We denote the material parameters as
	$$\delta_i^m(v_i^m)^2 = \delta_iv_i^2b_i^m$$
	for some (possibly random) variables $b_i^m$ which satisfy $b_i^m = O(1)$ as $\delta \to 0$. We define $\B_m$ as the $N\times N$ diagonal matrix whose $i$\textsuperscript{th} entry is given by $b_i^m$.

The Floquet transform is the tool that allows us to show the following result, by transforming into the dual space and then back again.
	\begin{prop} \label{prop:loc_main}
		Any localized solution $u$ to \eqref{eq:scattering}, corresponding to a subwavelength frequency $\omega = \omega_0 + O(\delta)$, satisfies 		\begin{equation} \label{eq:anderson_m}
			\B_m\sum_{n\in \Lambda} \C^{m-n} \u^n  = \omega_0^2 \u^m,
		\end{equation}
	for every $m\in \Lambda$, where $\C^m = \I[\widehat{\C}^\alpha](m).$	
	\end{prop}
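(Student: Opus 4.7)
The plan is to combine the piecewise-constant structure from \Cref{lem:constant} with Green's identity on each resonator and to identify the resulting Toeplitz coefficients with the inverse Floquet transform of $\C^\alpha$.

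First, by \Cref{lem:constant}, $u(x)=u_i^m+O(\delta^{1/2})$ on each $D_i^m$. Integrating the interior Helmholtz equation $\Delta u=-(\omega/v_i^m)^2 u$ over $D_i^m$ gives, via the divergence theorem,
$$\int_{\partial D_i^m}\frac{\partial u}{\partial\nu}\bigg|_-\dx\sigma = -\frac{\omega^2|D_i|}{(v_i^m)^2}u_i^m + O(\delta^{3/2}),$$
and the transmission condition on $\partial D_i^m$ converts the left-hand side to $\delta_i^m\int_{\partial D_i^m}(\partial u/\partial\nu)|_+\dx\sigma$. Using $\omega^2=\omega_0^2+O(\delta^{3/2})$ together with the factorization $\delta_i^m(v_i^m)^2=\delta_i v_i^2 b_i^m$ yields
$$\omega_0^2 u_i^m = -\frac{\delta_i v_i^2 b_i^m}{|D_i|}\int_{\partial D_i^m}\frac{\partial u}{\partial\nu}\bigg|_+\dx\sigma + O(\delta^{1/2}).$$

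Next, I would represent the exterior field as $u=\S_\D[\psi]$ on $\R^d\setminus\overline{\D}$, where $\S_\D$ is the free-space Laplace single-layer potential on the full infinite boundary $\partial\D$ and the density is recovered at leading order as $\psi=\S_\D^{-1}[u|_{\partial\D}]$. Substituting the piecewise-constant approximation $u|_{\partial\D}\approx\sum_{j,n}u_j^n\chi_{\partial D_j^n}$ and integrating the standard jump relation for the exterior normal derivative over $\partial D_i^m$, the principal-value term averages to zero at leading order, so
$$\int_{\partial D_i^m}\frac{\partial u}{\partial\nu}\bigg|_+\dx\sigma = -\sum_{n\in\Lambda,\,j} C_{ij}^{m-n}u_j^n + O(\delta^{1/2}),$$
where $C_{ij}^{m-n}:=-\int_{\partial D_i^m}\S_\D^{-1}[\chi_{\partial D_j^n}]\dx\sigma$ depends only on $m-n$ by translation invariance of the free-space Green's function. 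Combined with the previous step this produces the desired leading-order equation componentwise, modulo identification of $C^{m-n}$ with $\I[C^\alpha](m-n)$.

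To carry out this last identification, Floquet-decompose the free Green's function as $G=|Y^*|^{-1}\int_{Y^*}G^\alpha\dx\alpha$, which lifts to a direct-integral decomposition $\S_\D=|Y^*|^{-1}\int_{Y^*}^\oplus\S_D^\alpha\dx\alpha$ and, since $\S_D^\alpha$ is invertible away from $\alpha=0$ by the quasiperiodic theory preceding \Cref{thm:res_periodic}, of its inverse. Decomposing $\chi_{\partial D_j^n}$ into its Floquet modes gives
$$C_{ij}^{m-n} = \frac{1}{|Y^*|}\int_{Y^*} C_{ij}^\alpha e^{-\iu\alpha\cdot(m-n)}\dx\alpha = \I[C^\alpha](m-n),$$
and the conversion $\C_{ij}^\alpha=(\delta_iv_i^2/|D_i|)C_{ij}^\alpha$ from \Cref{defn:GQCM} transfers the identification to $\C$. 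Passing to the limit $\delta\to 0$ then yields \eqref{eq:anderson_m}. The main technical obstacle is making the layer-potential representation rigorous on the infinite, non-periodic array, since inverting $\S_\D$ on the unbounded boundary $\partial\D$ is delicate; I would handle this using the $L^2$-localization normalization of $u$ together with the decay of $G$ and the direct-integral decomposition, so that $\S_\D^{-1}$ is realized as the direct integral of the well-defined inverses $(\S_D^\alpha)^{-1}$ for $\alpha$ away from the origin of $Y^*$.
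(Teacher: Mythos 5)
Your proposal takes a genuinely different route from the paper. The paper's own proof is essentially a two-line Floquet-transform manipulation: it imports the identity \eqref{eq:main}, $\C^\alpha \sum_{m} \u^m e^{\iu \alpha\cdot m} = \omega^2_0\sum_{m}(\B_m)^{-1} \u^m e^{\iu \alpha\cdot m}$, wholesale from the reference \cite{pt_topological}, observes that the left-hand side is $\C^\alpha \F[\u^\cdot](\alpha)$ and the right-hand side is $\omega_0^2\F[(\B_\cdot)^{-1}\u^\cdot](\alpha)$, and applies $\I$ together with the convolution theorem to land on \eqref{eq:anderson_m}. You instead attempt to re-derive the underlying capacitance relation from the PDE itself, via Green's identity on each resonator, the transmission condition, and a layer-potential representation of the exterior field --- which is, in substance, a reconstruction of how \eqref{eq:main} is proved in the cited work. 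This buys self-containedness at the cost of having to shoulder all the technical weight that the paper delegates to \cite{pt_topological}.

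Within that more ambitious route there are concrete points to repair. First, the error bookkeeping: in your display $\omega_0^2 u_i^m = -\frac{\delta_i v_i^2 b_i^m}{|D_i|}\int_{\partial D_i^m}\frac{\partial u}{\partial\nu}\big|_+\dx\sigma + O(\delta^{1/2})$, both retained terms are $O(\delta)$, so an $O(\delta^{1/2})$ remainder makes the identity vacuous; the remainder must be $O(\delta^{3/2})$ (which is what your first step actually delivers, since $\omega^2=\omega_0^2+O(\delta^{3/2})$ and the prefactor is $O(\delta)$). Second, the exterior field satisfies the Helmholtz equation at wavenumber $\omega/v$, not Laplace's equation, so the representation must use the Helmholtz single layer potential $\S_\D^{\omega/v}$ and the replacement by the static $\S_\D$ has to be justified by an expansion in $\omega=O(\delta^{1/2})$ --- straightforward in $d=3$ but requiring care with the logarithmic singularity in $d=2$, which the paper's framework covers. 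Third, and most seriously, the invertibility of $\S_\D$ on the unbounded, aperiodic boundary $\partial\D$ and the validity of the direct-integral factorization $\S_\D^{-1}=|Y^*|^{-1}\int_{Y^*}^\oplus(\S_D^\alpha)^{-1}\dx\alpha$ for a merely $\ell^2$-localized mode is precisely the nontrivial content of \Cref{lem:constant} and of \eqref{eq:main} in \cite{pt_topological}; you correctly flag it but do not close it, so as written the argument still rests on an unproved analogue of the very result the paper cites. Your final identification of $C^{m-n}$ with $\I[C^\alpha](m-n)$, granted the direct-integral decomposition, is correct and matches the paper's definition $\C^m=\I[\C^\alpha](m)$.
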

\begin{proof}
	We can combine the fact from \Cref{lem:constant} that the eigenmodes are approximately constant on the resonators with the fact from \Cref{thm:res_periodic} that the modes are given by eigenvectors of the generalized quasiperiodic capacitance matrix to obtain a discrete-operator formulation of the subwavelength resonance problem. This was proved in \cite[Proposition 4.2]{pt_topological} and says that 	
	\begin{equation} \label{eq:main}
		\widehat{\C}^\alpha \sum_{m\in \Lambda} \u^m e^{\iu \alpha\cdot m} = \omega^2_0\sum_{m\in \Lambda}(\B_m)^{-1} \u^m e^{\iu \alpha\cdot m},
	\end{equation}
	for all $\alpha \in Y^*$. We can choose to pose the problem either in the real (spatial) variable $m$ or in the dual (momentum) variable $\alpha$. 
	In this framework, equation \eqref{eq:main}, posed in the dual variable $\alpha$, reads
	\begin{align} \label{eq:anderson_alp}
		\widehat{\C}^\alpha \widehat{\u}^\alpha &= \omega_0^2 \left(\sum_{m\in \Lambda}(\B_m)^{-1} \I[\widehat{\u}^\alpha ](m)e^{\iu  \alpha \cdot m} \right), \qquad \widehat{\u}^\alpha =  \F[\u^m](\alpha),
	\end{align}
	We can re-frame this equation in the real variable $m$ by taking the inverse Floquet transform. Since $\u^m = \I[\widehat{\u}^\alpha](m)$, we can apply $\I$ to \eqref{eq:anderson_alp} to obtain \eqref{eq:anderson_m}.
\end{proof}

We can write \eqref{eq:anderson_m} in terms of discrete lattice operators. If $\uf \in \ell^2(\Lambda^N,\Cb^N)$, we define the operator $\Cf$ on $\ell^2(\Lambda^N,\Cb^N)$ by
$$\Cf \uf (m) = \sum_{n\in \Lambda} \C^{m-n} \uf(n).$$
Moreover, we define the operator $\Bf$ as the diagonal multiplication by $\B_m$. With this notation, equation \eqref{eq:anderson_m} reads
\begin{equation} \label{eq:anderson_f}
	\Bf\Cf \uf = \omega_0^2\uf.
\end{equation}
The equation \eqref{eq:anderson_f} characterizes the subwavelength resonant frequencies and their corresponding resonant modes. A localized mode corresponds to an eigenvalue of the operator $\Bf\Cf$. In the periodic case (when $\Bf = I$), the spectrum of the lattice operator $\Cf$ is continuous and does not contain eigenvalues, so there are no localized modes. The spectral bands are given by \Cref{thm:res_periodic} in this case. As we shall see, the operator $\Bf\Cf$ might have a pure-point spectrum in the non-periodic case.

For a linear lattice, $d_l=1$, these operators amount to doubly infinite matrices:
	$$\Cf = \left(\begin{smallmatrix} \sddots & \svdots & \svdots  & \svdots & \svdots & \sadots  \\
		\cdots & \C^0 & \C^1 & \C^2& \C^3  & \cdots \\
		\cdots & \C^{-1} & \C^0 & \C^{1}& \C^2  & \cdots \\
		\cdots & \C^{-2} & \C^{-1} & \C^0 & \C^1 & \cdots \\
		\cdots & \C^{-3} & \C^{-2} & \C^{-1} & \C^0 & \cdots \\
		\sadots & \svdots & \svdots  & \svdots & \svdots & \sddots \end{smallmatrix}\right), \quad \uf = \left(\begin{smallmatrix} \svdots \\ \u^{-1} \\ \u^{0}\\\u^{1}\\\u^{2}\\ \svdots\end{smallmatrix}\right), \quad \Bf = \left(\begin{smallmatrix} \sddots & \svdots & \svdots  & \svdots & \svdots & \sadots  \\
		\cdots & \B_{-1} & 0 & 0 & 0 & \cdots \\
		\cdots & 0 & \B_{0} & 0 & 0 & \cdots \\
		\cdots & 0 & 0 & \B_1 & 0 & \cdots \\
		\cdots & 0 & 0 & 0 & \B_2 & \cdots \\
		\sadots & \svdots & \svdots  & \svdots & \svdots & \sddots \end{smallmatrix}\right).$$
	Here, $\Cf$ is the (block) Laurent operator corresponding to the symbol $\widehat{\C}^\alpha$.

	\begin{rmk}
		This discrete lattice-operator formulation generalizes the approach used in \cite{pt_topological}.
	\end{rmk}	

	\subsection{Toeplitz matrix formulation for compact defects} \label{sec:toepltiz}
	
	In this section, we assume that the operators $B_m$ are identity for all but finitely many $m$. For simplicity, we phrase this analysis in the case $\Lambda = \Z$, where we assume that the resonators corresponding to $0 \leq m \leq M$ have differing parameters, \ie{},
	\begin{equation}\label{eq:b}
		b_i^m = \begin{cases} 1, \quad & m< 0 \text{ or } m > M, \\
		1 + \eta_i^m, & 0 \leq m \leq M,\end{cases}
	\end{equation}
	for some parameters $\eta_i^m \in (-1, \infty)$ which are independent of $\delta$. We let $H_m$ be the diagonal matrix with entries $\eta_i^m$. In this setting, we obtain a (block) Toeplitz matrix formulation. For $\omega \notin \sigma(\Cf)$, we define
	$$ \mathcal{T}(\omega) =  \left(\begin{smallmatrix} T^0 & T^1 & T^2 & \cdots & T^M   \\
		T^{-1} & T^0 & T^{1} & \cdots & T^{M-1}   \\
		T^{-2} & T^{-1} & T^0 & \cdots & T^{M-2}  \\
		\svdots & \svdots  & \svdots & \sddots & \svdots  \\
		T^{-M} & T^{-(M-1)} & T^{-(M-2)} & \cdots & T^0 \\ \end{smallmatrix}\right), \qquad T^m = -\frac{1}{|Y^*|}\int_{Y^*} e^{\iu \alpha m}\widehat{\C}^\alpha\left(\widehat{\C}^\alpha - \omega^2I\right)^{-1}\dx \alpha.$$
	
	\begin{prop} \label{prop:toeplitz}
		Assume that $\Lambda = \Z$ and that $b_i^m$ are given by \eqref{eq:b}. Then $\omega_0 \notin \sigma(\Cf)$ is an eigenvalue of \eqref{eq:anderson_f} if and only if 
		$$\det\bigl(I-\mathcal{H} \T(\omega_0) \bigr) = 0,$$
		where $\mathcal{H}$ is the block-diagonal matrix with entries $H_m$.
	\end{prop}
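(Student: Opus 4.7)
The plan is to reduce the infinite-dimensional eigenvalue problem \eqref{eq:anderson_f} to a finite block-linear system by isolating the contribution of the finitely many defect sites, via a Lippmann--Schwinger / Schur-complement type argument. I would start by writing $\Bf = I + \mathcal{X}_{\mathrm{full}}$, where $\mathcal{X}_{\mathrm{full}}$ denotes the block-diagonal operator on $\ell^2(\Lambda^N,\Cb^N)$ equal to $X_m$ on sites $0 \le m \le M$ and vanishing elsewhere. Equation \eqref{eq:anderson_f} then becomes $(\omega_0^2 I - \Cf)\uf = \mathcal{X}_{\mathrm{full}}\Cf\uf$. Reading the hypothesis as $\omega_0^2 \notin \sigma(\Cf)$, the Laurent operator $\omega_0^2 I - \Cf$ is boundedly invertible on $\ell^2$, and I can solve $\uf = (\omega_0^2 I - \Cf)^{-1}\mathcal{X}_{\mathrm{full}}\Cf\uf$.

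Applying $\Cf$ to both sides (which commutes with the resolvent, both being block-Laurent operators with commuting symbols) and setting $\mathfrak{w} := \mathcal{X}_{\mathrm{full}}\Cf\uf$ yields the fixed-point identity $\mathfrak{w} = \mathcal{X}_{\mathrm{full}} K \mathfrak{w}$, where $K := \Cf(\omega_0^2 I - \Cf)^{-1}$ is the block Laurent operator with Fourier symbol $\C^\alpha(\omega_0^2 I - \C^\alpha)^{-1}$. The key structural observation is that $\mathfrak{w}$ is supported on $\{0,1,\dots,M\}$, so the fixed-point identity is completely captured by its restriction to this finite window.

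The third step is to identify this finite restriction with $\T(\omega_0)$. By the inverse Floquet formula, the block kernel of $K$ is $K^m = \frac{1}{|Y^*|}\int_{Y^*} e^{-\iu\alpha m}\,\C^\alpha(\omega_0^2 I - \C^\alpha)^{-1}\,\mathrm{d}\alpha$; flipping the sign of the exponent and of the $\omega^2 I$ term gives $K^{m-n} = T^{n-m}$, which is precisely the $(m,n)$ block of $\T(\omega_0)$ as written in the display. Hence the restriction of $\mathfrak{w} = \mathcal{X}_{\mathrm{full}} K \mathfrak{w}$ to $\{0,\dots,M\}$ becomes the finite block equation $\bigl(I - \mathcal{X}\T(\omega_0)\bigr)\mathfrak{w}_0 = 0$, whose nontrivial solvability is equivalent to $\det\bigl(I - \mathcal{X}\T(\omega_0)\bigr) = 0$.

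For the two directions: if $\uf$ is a nontrivial eigenmode then $\mathfrak{w}_0 = 0$ would force $\uf = (\omega_0^2 I - \Cf)^{-1}\mathcal{X}_{\mathrm{full}}\Cf\uf = 0$, so $\mathfrak{w}_0 \neq 0$ and the determinant must vanish; conversely, given a nontrivial $\mathfrak{w}_0$ in the kernel, I extend it by zero to $\mathfrak{w}\in\ell^2$ and set $\uf := (\omega_0^2 I - \Cf)^{-1}\mathfrak{w}$. Boundedness of the resolvent gives $\uf \in \ell^2$, and the finite kernel relation $\mathfrak{w}_0 = \mathcal{X}\T(\omega_0)\mathfrak{w}_0$ is exactly what is needed to check that $\mathcal{X}_{\mathrm{full}}\Cf\uf = \mathfrak{w}$ and hence $\Bf\Cf\uf = \omega_0^2\uf$. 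I expect the main obstacle to be the bookkeeping: carefully aligning the sign and direction conventions of the Floquet transform $\F$, its inverse, and the exponent in the definition of $T^m$, so that the restriction of the convolution operator $K$ really does coincide with the Toeplitz matrix $\T(\omega_0)$ without stray sign errors or transpositions. Once this alignment is in place, everything else is direct algebraic verification.
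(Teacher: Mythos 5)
Your argument is correct and is essentially the paper's own proof in real-space resolvent language: the paper isolates the same finitely supported vector (your $\mathfrak{w}$, its $\mathbf{c}_m=\I[X_m\C^\alpha\u^\alpha](m)$) by noting that $(\C^\alpha-\omega_0^2I)\u^\alpha$ is a finite Fourier series and then inverting the symbol, which is exactly your Lippmann--Schwinger reduction $\mathfrak{w}=\mathcal{X}_{\mathrm{full}}\Cf(\omega_0^2I-\Cf)^{-1}\mathfrak{w}$ restricted to the defect window, and your sign/index bookkeeping ($K^{m-n}=T^{n-m}$) matches the displayed $\T(\omega_0)$. The only addition you make is spelling out the converse direction (extending a kernel vector by zero and reconstructing $\uf$ via the resolvent), which the paper leaves implicit.
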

\begin{proof}
From \eqref{eq:anderson_alp} it follows that, for each $m$, we have
	$$\I\big[\B_m \widehat{\C}^\alpha \widehat{\u}^\alpha \big](m)  = \omega_0^2  \I\big[\widehat{\u}^\alpha\big](m).$$
	Since $H_m$ is only nonzero for $m=0,1,...,M$, we have
	\begin{align}
		\I\Big[ (\widehat{\C}^\alpha   - \omega_0^2I)  \widehat{\u}^\alpha + H_m\widehat{\C}^\alpha\widehat{\u}^\alpha \Big](m) = 0, \qquad &0\leq m \leq M, \label{eq:fourierlow}\\
		\I\Big[(\widehat{\C}^\alpha   - \omega_0^2I)  \widehat{\u}^\alpha\Big](m) = 0, \qquad &m< 0 \text{ or } m > M.
	\end{align}
	In other words, $(\widehat{\C}^\alpha   - \omega_0^2I)  \widehat{\u}^\alpha$ is given by a finite Fourier series in $\alpha$:
	$$(\widehat{\C}^\alpha   - \omega_0^2I)  \widehat{\u}^\alpha = \sum_{m=0}^M \mathbf{c}_m e^{\iu \alpha m},$$
	where $\mathbf{c}_m = 	\I[H_m\widehat{\C}^\alpha\widehat{\u}^\alpha](m)$. Together with \eqref{eq:fourierlow} we find that
	$$\mathbf{c}_n = -	H_n\sum_{m=0}^M\frac{1}{|Y^*|}\int_{Y^*}e^{\iu \alpha (m-n)} \widehat{\C}^\alpha(\widehat{\C}^\alpha   - \omega_0^2I)^{-1}  \mathbf{c}_m  \dx \alpha.$$
	If we denote $\mathbf{c} = \left(\begin{smallmatrix} \mathbf{c}_{0} \\ \svdots \\ \mathbf{c}_M\end{smallmatrix}\right)$, we then obtain the following non-linear eigenvalue problem in $\omega_0$,
	\begin{equation}\label{eq:toeplitz}
		\mathcal{H} \T(\omega_0) \mathbf{c} = \mathbf{c},
	\end{equation}
	where $\mathcal{H}$ is the block-diagonal matrix with entries $H_m$.
\end{proof}
	\begin{rmk}
	The localized modes can, equivalently, be described by the Toeplitz matrix with entries
		$$\widetilde{T}^m = \frac{1}{|Y^*|}\int_{Y^*} e^{\iu \alpha m}\left(\widehat{\C}^\alpha - \omega^2I\right)^{-1}\dx \alpha,$$
		leading to the non-linear eigenvalue problem
		$$\omega^2\mathcal{Y} \widetilde{\T}(\omega_0) \mathbf{c} = \mathbf{c},$$
		where $\mathcal{Y} = (I+\mathcal{H})^{-1} - I$.
	\end{rmk}

	\begin{rmk}
		The analysis in this section generalizes approaches found in \cite{edge_mode} and \cite{pt_defect, ln_defect}.
	\end{rmk}

	\section{Finite perturbations of a chain}\label{sec:defect}
	In this section, we take the specific example of a chain of equidistant resonators where a finite number of resonators are perturbed, as illustrated in Figures~\ref{fig:arraysketch}(a)~and~(b). Specifically, we have the lattice dimension $d_l = 1$, the space dimension $d=3$ and we take a unit cell with just a single resonator: $N=1$. As we shall see, a single defect may induce a one localized mode. When multiple defects hybridize, level repulsion will guarantee the existence of at least one localized mode. We emphasize that the phenomena highlighted in this section are not specific to the dimensionality and will be qualitatively similar in $d_l =2 $ or $d_l = 3$; see, for example, \Cref{fig:modesA}.
	
	\subsection{Single defect}
	
	We assume that $N = 1$ and that the resonator corresponding to $m=0$ has a different parameter, \ie{},
	\begin{equation}
	b_1^m = \begin{cases} 1, \ &m\neq 0, \\ 1 + \eta, \ &m=0, \end{cases}
	\end{equation}
	for some parameter $\eta > -1$. In the single-resonator case, $\widehat{\C}^\alpha$ is just a scalar $\widehat{\C}^\alpha = \lambda_1^\alpha$. It follows from \Cref{prop:toeplitz} that any eigenvalue $\omega$ satisfies $\eta T^0(\omega) = 1$; in other words that
	\begin{equation} \label{eq:M1}
		\frac{\eta}{|Y^*|}\int_{Y^*} \frac{\lambda_1^\alpha }{\omega^2-\lambda_1^\alpha} \dx \alpha = 1.
	\end{equation}
We let $\omega_* =  \max\limits _{\alpha\in Y^*}\sqrt{\lambda_1^\alpha}$. For $\omega > \omega_*$, it is clear that $T^0(\omega) >0$. Therefore, there are no solutions in the case $\eta\leq 0$. Moreover, 
	$$\lim_{\omega \to \omega_*}T^0(\omega) = \infty, \qquad \lim_{\omega \to \infty}T^0(\omega) = 0.$$
	Therefore, if $\eta>0$, there is a solution $\omega=\omega_0$ of \eqref{eq:M1}. In other words, the defect induces an eigenvalue $\omega_0^2$ in the pure-point spectrum of $\Bf\Cf$, corresponding to an exponentially localized eigenmode. 
	
	The expected localized eigenmode is shown in \Cref{fig:point}(a) for a finite system of 31 circular resonators. Here, the capacitance matrix was numerically computed using the multipole expansion method. This method relies on the fact that solutions to the Helmholtz equation on radially symmetric domains can be expressed concisely using Bessel functions for two-dimensional problems (and spherical harmonics for spherically symmetric domains in three dimensions). Hence, a discrete version of the operator $\S_D^\alpha$ can be obtained by considering a truncated (finite) subset of this basis and using appropriate addition theorems to handle the lattice sums. This was explained in detail in \cite[Appendix A]{ammari2020topologically}  for the three-dimensional case and in \cite[Appendix A]{davies2020hopf} for the two-dimensional case. Finally, it is straightforward to implement \Cref{defn:QCM} by calculating the inverse operator and integrating over the boundaries to obtain the capacitance coefficients $\widehat{C}^\alpha$. The software used to produce the examples presented in this work is available open source online\footnote{\url{https://doi.org/10.5281/zenodo.6577767}}.
	
\begin{figure}
	\centering
	\begin{subfigure}[b]{0.49\linewidth}
		\includegraphics[width=\linewidth]{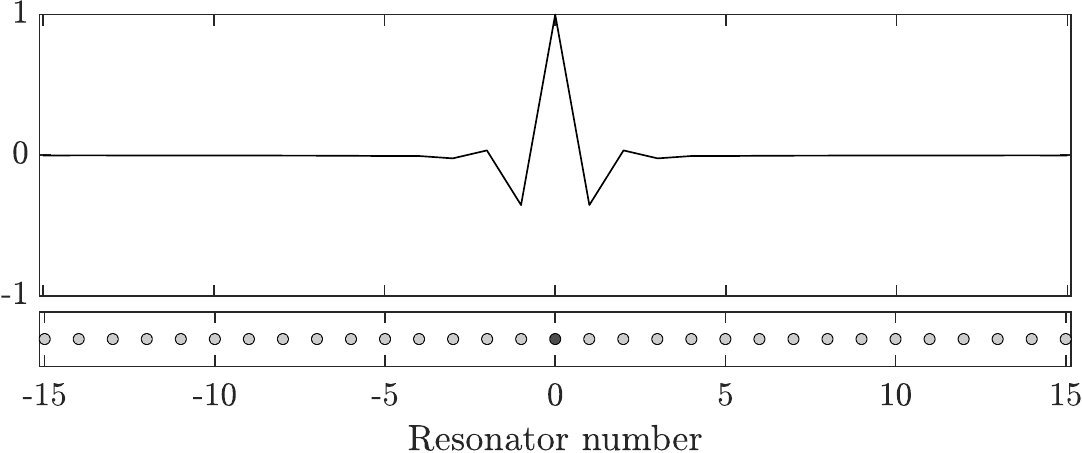}
		\caption{}\label{fig:pointa}
	\end{subfigure}
	\hspace{0.1cm}
	\begin{subfigure}[b]{0.49\linewidth}
		\includegraphics[width=\linewidth]{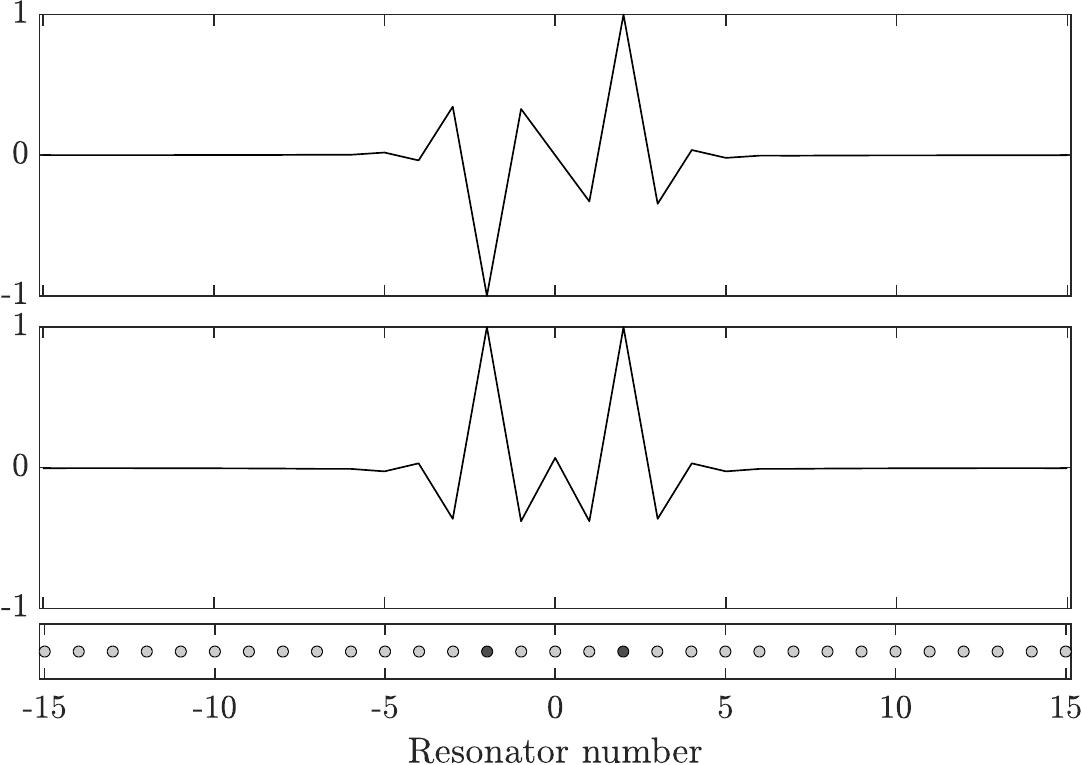}
		\caption{}
	\end{subfigure}
\caption{(a) Plot of the localized mode along the $x_1$-axis in the case of a single defect with $\eta = 0.15$. A sketch of the resonator structure is shown below: here we have a chain of $N = 31$ equally spaced circular resonators with the perturbation $1\mapsto 1+\eta$ made to the material parameters of centre resonator. We observe a single localized mode even in the case of small $\eta>0$, with larger $\eta$ leading to a stronger degree of localization (a faster decaying mode). However, in the case $-1<\eta <0$ there are no localized modes. (b) The two localized modes that exist when two defects are introduced with perturbation strengths $\eta_1=\eta_2=0.15$ and separation distance $M=4$. In this case, the higher frequency mode has a dipole (odd) symmetry whole the lower mode has a monopole (even) symmetry.} \label{fig:point}
\end{figure}

	\subsection{Double defect}
	Next, we continue with the case $N=1$ but now assume that there are two defects at the resonators corresponding to $m=0$ and $m=M$;
	\begin{equation}
	b_1^m = 1, \ m\notin \{0,M\},\qquad b_1^0 = 1 + \eta_1, \qquad b_1^M = 1+\eta_2,
	\end{equation}
	for some parameters $\eta_1,\eta_2\neq 0$. In this case, the localized modes are characterized by the Toeplitz matrix formulation
		$$\mathcal{H} \T(\omega) \mathbf{c} = \mathbf{c},$$
	where
	$$\mathcal{H} = \left(\begin{smallmatrix} \eta_1 & 0 & 0 & \cdots & 0   \\
		0 & 0 & 0 & \cdots & 0   \\
		0 & 0 & 0 & \cdots & 0  \\
		\svdots & \svdots  & \svdots & \sddots & \svdots  \\
		0 & 0 & 0 & \cdots & \eta_2 \\ \end{smallmatrix}\right), \quad \mathcal{T}(\omega) =  \left(\begin{smallmatrix} T^0 & T^1 & T^2 & \cdots & T^M   \\
		T^{-1} & T^0 & T^{1} & \cdots & T^{M-1}   \\
		T^{-2} & T^{-1} & T^0 & \cdots & T^{M-2}  \\
		\svdots & \svdots  & \svdots & \sddots & \svdots  \\
		T^{-M} & T^{-(M-1)} & T^{-(M-2)} & \cdots & T^0 \\ \end{smallmatrix}\right), \quad T^m = -\frac{1}{|Y^*|}\int_{Y^*} \frac{\lambda_1^\alpha e^{\iu \alpha m}}{\lambda_1^\alpha - \omega^2} \dx \alpha.$$
	In other words, we are solving the equation
	$$\det(I - \mathcal{H}\T(\omega)) = 0.$$
	We observe that
	$$\det(I - \mathcal{H}\T(\omega)) = \det\left(I-\begin{pmatrix} \eta_1T^0 & \eta_1T^M   \\
		\eta_2T^{-M} & \eta_2T^0 \\ \end{pmatrix} \right),$$
	so we seek the solutions $\omega$ to
	\begin{equation}\label{eq:2pt}
		\left(1-\eta_1T^0(\omega)\right)\left(1-\eta_2T^0(\omega)\right) - \eta_1\eta_2T^{-M}(\omega)T^M(\omega) = 0.
	\end{equation}

\begin{figure}
	\centering
	\begin{subfigure}{0.31\linewidth}
	\includegraphics[width=\linewidth]{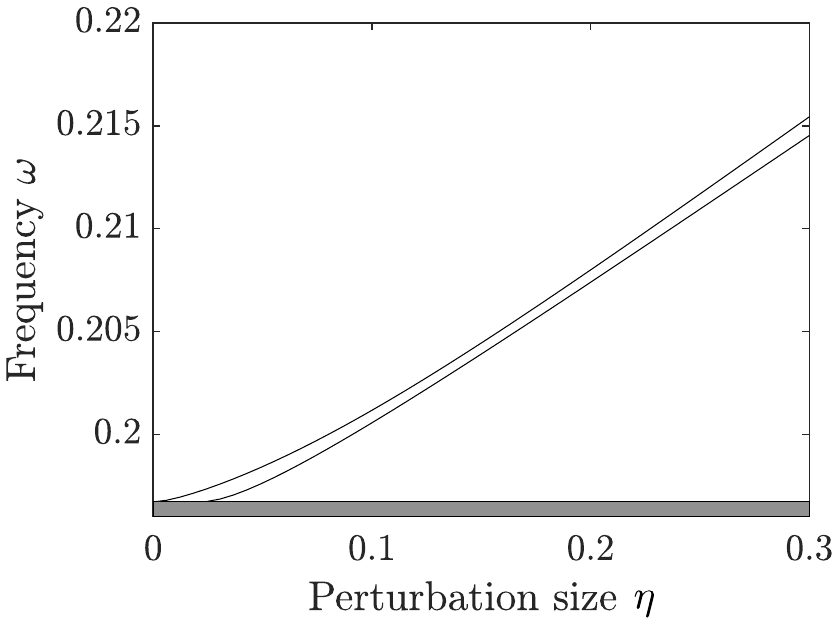}
	\caption{} \label{fig:funxM3}
	\end{subfigure}
	\hspace{0.1cm}
	\begin{subfigure}{0.31\linewidth}
	\includegraphics[width=\linewidth]{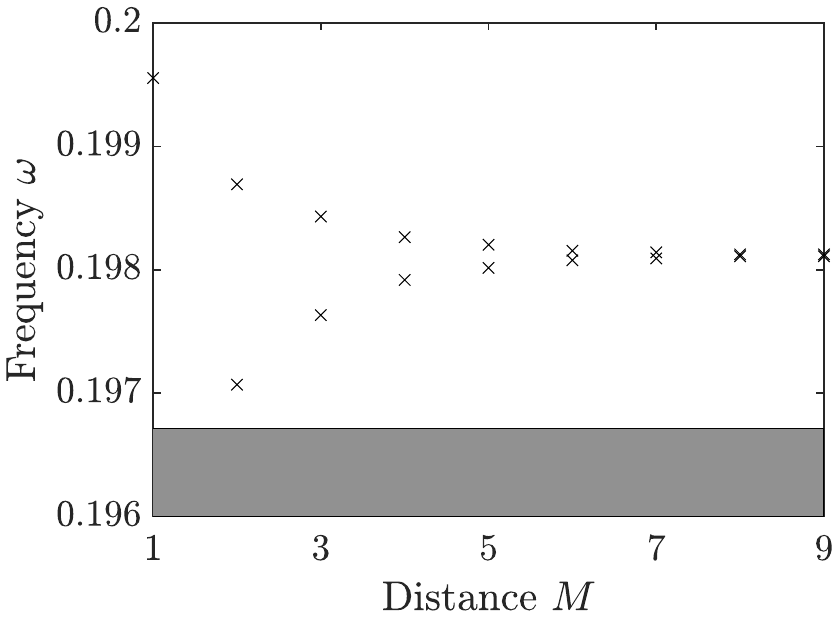}
	\caption{}
	\end{subfigure}
	\hspace{0.1cm}
	\begin{subfigure}{0.31\linewidth}
	\includegraphics[width=\linewidth]{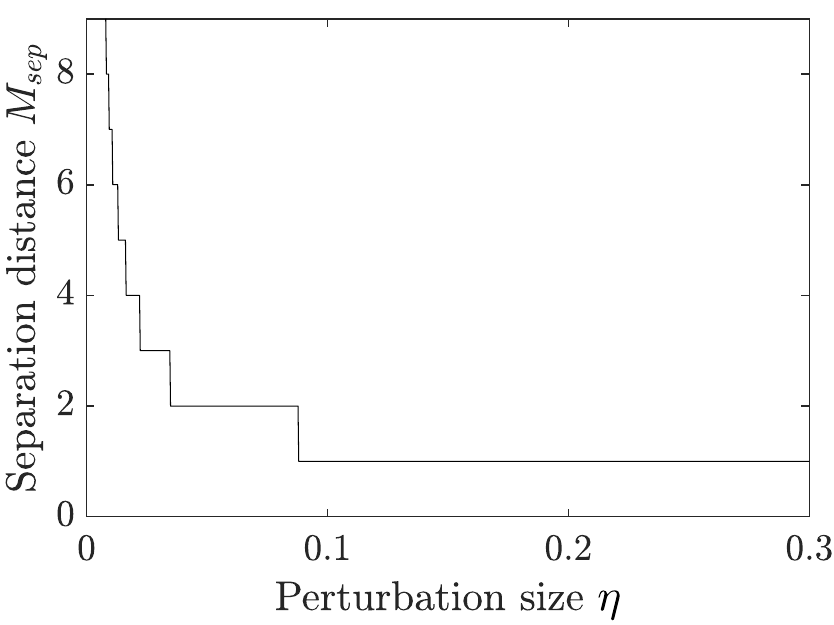}
	\caption{}
	\end{subfigure}
	\caption{Localized frequencies for two defects, for the case of identical defects $\eta_1=\eta_2=\eta$. (a) Solutions $\omega$ of \eqref{eq:2pt} as function of the perturbation size $\eta$ for fixed distance $M=3$. (b) Solutions $\omega$ of \eqref{eq:2pt} as function of the distance $M$ for fixed perturbation size $\eta=0.05$. If either $M$ or $\eta$ is small, then there is only one solution, as the level repulsion causes the second possible localized mode to fall within the spectral band (denoted by the shaded area). (c) The minimum separation distance $M_{sep}$ required to obtain two solutions to \eqref{eq:2pt}, plotted as a function of $\eta$ in the case $\eta_1=\eta_2 = \eta$.} \label{fig:twodefects}
\end{figure}

	\begin{figure}
	\centering
	\begin{subfigure}{0.45\linewidth}
	\includegraphics[width=\linewidth]{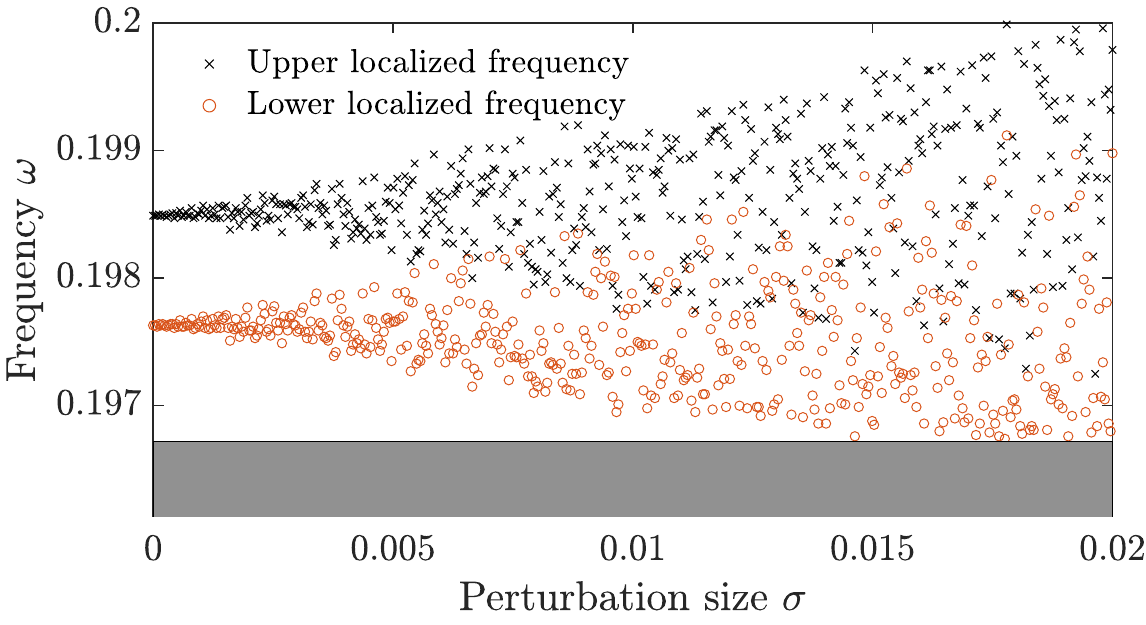}
	\caption{} \label{fig:tworandoma}
	\end{subfigure}
	\hspace{0.1cm}
	\begin{subfigure}{0.45\linewidth}
	\includegraphics[width=\linewidth]{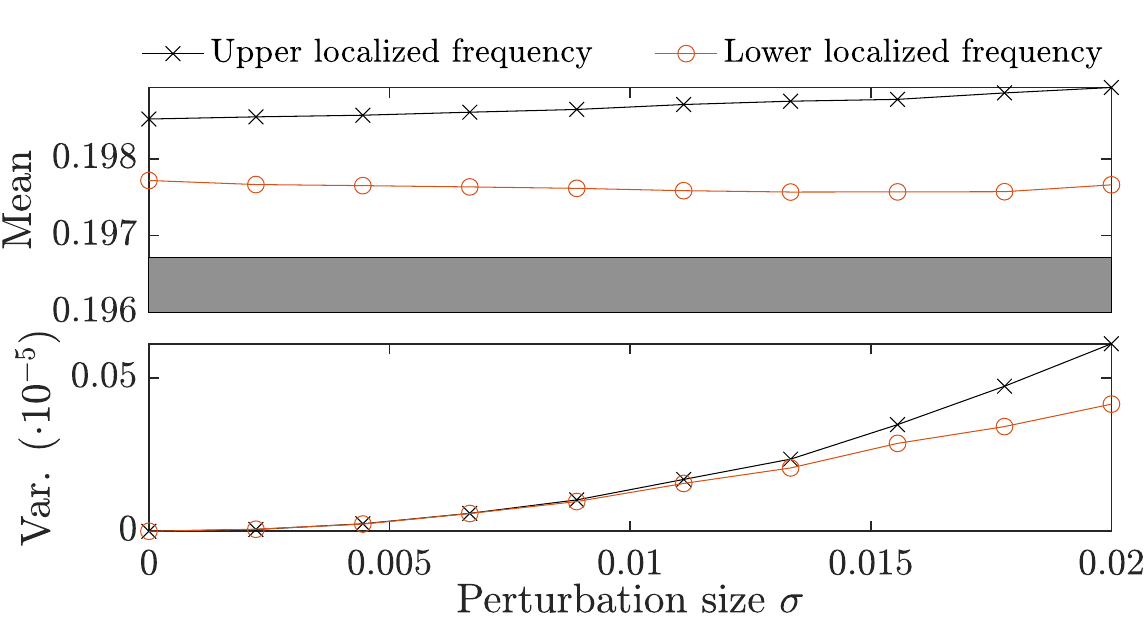}
	\caption{} \label{fig:two_meanvar}
	\end{subfigure}
	\caption{Localized frequencies for two random defects. (a) Solutions $\omega$ of \eqref{eq:2pt} with $\eta_1$ and $\eta_2$ drawn from a uniform $U[\mu-\sqrt{3}\sigma,\mu+\sqrt{3}\sigma]$ distribution, with varying $\sigma$, fixed mean $\mu=0.05$ and fixed separation distance $M=6$. (b) The mean (upper) and variance (lower) of the localized frequencies $\omega$ for different values of $\sigma$, fixed mean $\mu=0.05$ and fixed $M=3$. For each simulated value of $\sigma$, 1000 simulated values are used to compute the mean and variance, with values being left empty when no such frequency exists (\textit{e.g.} when there is only one solution to \eqref{eq:2pt}, as the hybridization causes the second solution to fall within the spectral band).} \label{fig:tworandom}
	\end{figure}
	
	\begin{figure}
	\centering
	\begin{subfigure}{0.45\linewidth}
	\includegraphics[width=\linewidth]{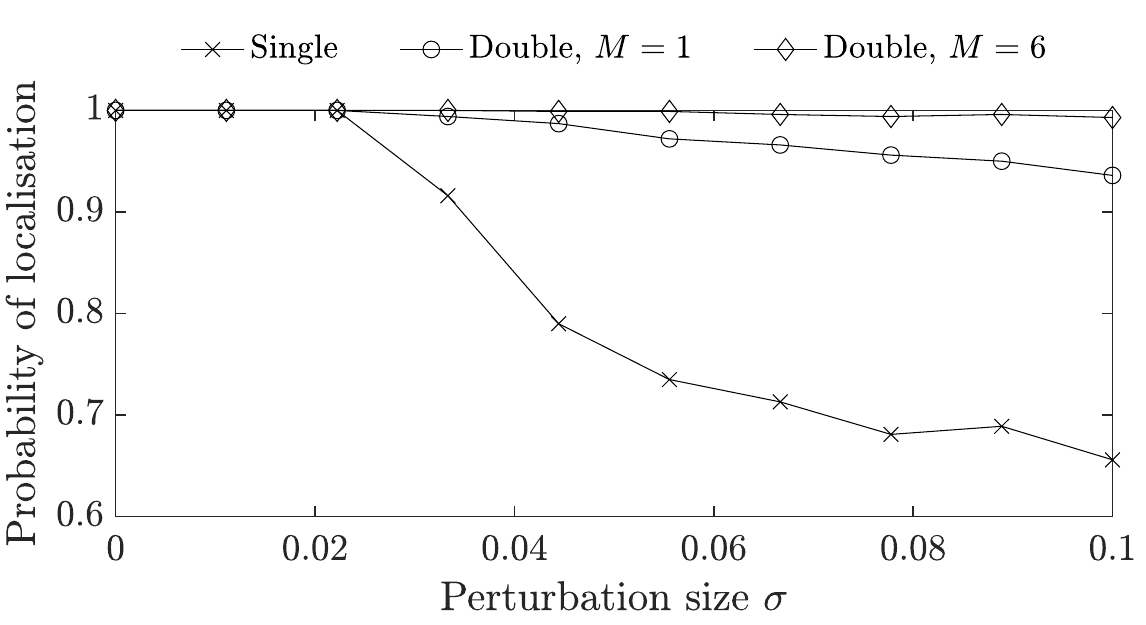}
	\caption{}
	\end{subfigure}
	\hspace{0.1cm}
	\begin{subfigure}{0.45\linewidth}
	\includegraphics[width=\linewidth]{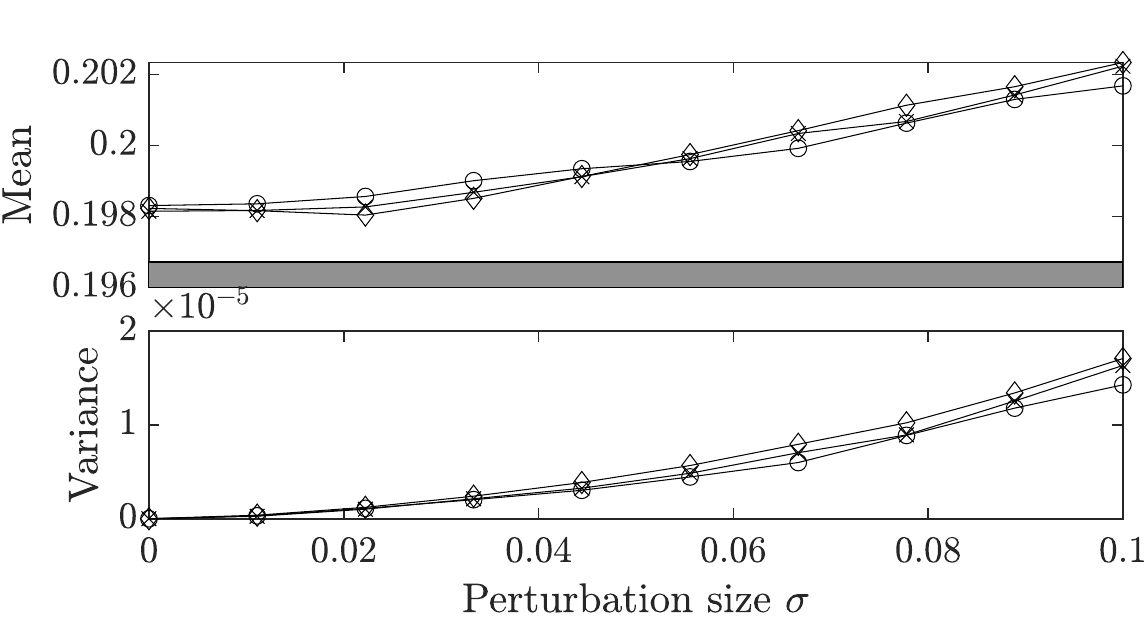}
	\caption{}
	\end{subfigure}
	\caption{Comparison of the robustness of systems with one or two defects. (a) The probability of at least one localized mode existing for the case of a single defect, a double defect with separation distance $M=1$ and a double defect with separation distance $M=6$. In each case, the perturbations are drawn at random from a uniform distribution  $U[\mu-\sqrt{3}\sigma,\mu+\sqrt{3}\sigma]$ and, for very large values of $\sigma$ we enforce $1+\eta_i\geq 10^{-4}>0$, so that the material parameters are all strictly positive. The means are chosen as $\mu=0.05$, $\mu=0.0325$ and $\mu=0.05$, respectively, so that the largest localized eigenfrequency is the same in each case when $\sigma=0$. (b) The mean and variance of the largest localized eigenfrequency for the three different models. Each value was calculated using 1000 independent random realizations.
	} \label{fig:robustness}
	\end{figure}

 	In \Cref{fig:twodefects} we show the resonant frequencies of defect modes in a system with two defects. These are obtained by solving \eqref{eq:2pt} for $\eta_1=\eta_2=\eta$ and varying either $\eta$ or the separation distance $M$. We see that for any $\eta>0$ there is at least one localized eigenmode. If either $M$ or $\eta$ is small, then there is only one solution, as the hybridization means the smaller solution is not greater than the maximum of the Bloch band so there is no localized eigenmode. For any fixed $\eta$ there is a critical minimum value $M_{\text{sep}}$ of $M$ in order for two localized eigenmodes to exist. In \Cref{fig:twodefects}(c) we show the relation between the critical values for the distance $M$ and the perturbation strength $\eta$.
	
	\subsubsection{Level repulsion in a two-defect structure}
	In \Cref{fig:tworandom} we model the resonant frequencies of a system with two random defects. In particular, we assume that the values of $\eta_1$ and $\eta_2$ are drawn independently from the uniform distribution $U[\mu-\sqrt{3}\sigma,\mu+\sqrt{3}\sigma]$. We see that for small $\sigma$ there are still two localized modes, however for larger values of $\sigma$ the eigenfrequencies can be be lost into the spectral band. Nevertheless, looking at \Cref{fig:tworandom}(b) we see that the introduction of random perturbations causes the average value of each mid-gap frequency to move further apart (and further from the edge of the band gap). This is the phenomenon of level repulsion. One consequence of this is that the eigenmodes, on average, become more strongly localized as the variance $\sigma^2$ increases (since the strength of the localization increases when the resonant frequency is further from the edge of the band gap). 
	
	Another consequence of level repulsion is that the existence of localized modes is much more robust in the case of two defects than in the case of a single defect. \Cref{fig:robustness}(a) shows the probability of at least one localized eigenmode existing for three different cases: a single defect, a double defect with small separation distance $M=1$ and a double defect with larger separation distance $M=6$. The defects are uniformly distributed with means chosen to normalize the distributions in the sense that the largest defect frequency when $\sigma=0$ is the same in all three cases ($\omega=0.198$). We see that the probability of localization drops off quickly for a single defect. When $\sigma=0.1$, the probability of localization for a single defect is 65.6\%, compared to 93.6\% for the double defect with small separation distance and 99.3\% when the separation distance is larger. The intuitive explanation for the difference between the two double defect cases is that the larger distance causes the two localized eigenfrequencies to be closer together in the deterministic case of $\sigma=0$, meaning that they interact with one another more strongly so the level repulsion has a greater effect. When at least one localized eigenmode exists, the distributions of the largest localized eigenvalue, shown in \Cref{fig:robustness}(b), are similar in all three cases.

	\subsubsection{Phase transition and eigenmode symmetry swapping}
	Analogously to \Cref{fig:twodefects}(a), \Cref{fig:trans}(a) shows the two localized frequencies, but here in the case $M=2$. When $M$ is even, the two curves will intersect (here around $\eta\approx 0.21$), corresponding to a doubly degenerate frequency. Crucially, this is a transition point whereby the symmetries of the corresponding eigenmodes swap. Below the transition point, the first mode has dipole (odd) symmetry while the second mode has monopole (even) symmetry; above the transition point the first mode has monopole (even) symmetry while the second mode has dipole (odd) symmetry.
	
	We now study the degree of localization around this transition point. Throughout, we define the degree of localization $l(u)$ of an eigenmode $u\in \ell^2(\Lambda^N,\Cb^N)$ as
	\begin{equation}
		l(u) = \frac{\|u\|_\infty}{\|u\|_2}.
	\end{equation}
	We remark that $0<l(u) \leq 1$: small $l(u)$ corresponds to a delocalized mode while $l(u) = 1$ corresponds to a mode which is perfectly localized to a single resonator. 
	
	In \Cref{fig:trans}(b) we plot the average degree of localization when the parameters are drawn independently from $\U[\eta-\sqrt{3}\sigma, \eta+\sqrt{3}\sigma]$ for $\sigma = 10^{-4}$. The plot shows a sharp peak at the transition point; a remarkably high degree of localization is achieved even when the variance is small. This can be understood as a decoupling of the (otherwise coupled) modes, which, at the transition point, become concentrated to the individual defects as opposed to both defects. 
	
	For higher values of $\sigma$, we expect to achieve a strong degree of localization for a larger range of $\eta$. As seen in \Cref{fig:trans}(c), there is a region in the $\eta$-$\sigma$-plane where we achieve optimal localization. When $\eta$ is large relative to $\sigma$, the modes will couple and the degree of localization is low. When $\sigma$ is large relative to $\eta$, we are in a regime with, on average, only a single localized mode and a low degree of localization.
	
	\begin{figure}
		\centering
		\begin{subfigure}{0.31\linewidth}
			\begin{tikzpicture}
			\draw  (0,0) node{\includegraphics[width=\linewidth]{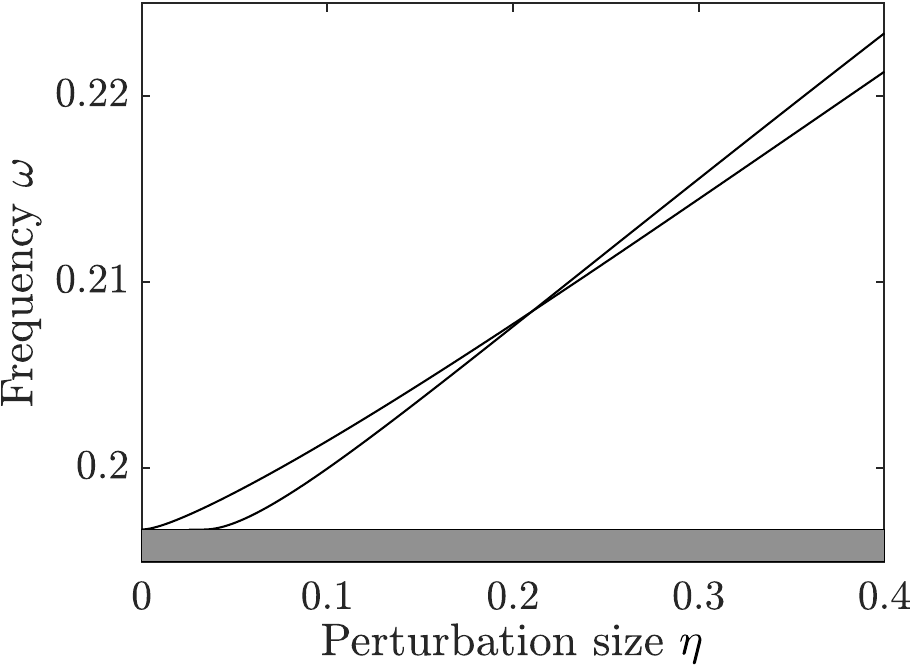}};

			\draw (0.4,-0.75)  node{\includegraphics[width=0.2\linewidth]{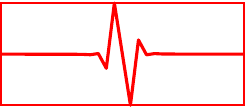}};
			\draw[->,red]  (-0.65,-0.7) to[out=0,in=180] (-0.1,-0.75);
			\draw[red,fill]  (-0.65,-0.7) circle(1pt);

			\draw (-1,0.1)  node{\includegraphics[width=0.2\linewidth]{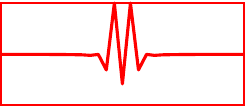}};
			\draw[->,red]  (-0.78,-0.62) to[out=100,in=-90] (-1,-0.12);
			\draw[red,fill]  (-0.78,-0.62) circle(1pt);

			\draw (1.7,0.2)  node{\includegraphics[width=0.2\linewidth]{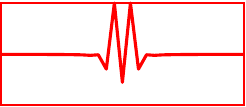}};
			\draw[->,red]  (1.6,0.92) to[out=-70,in=90] (1.7,0.42);
			\draw[red,fill]  (1.6,0.92) circle(1pt);

			\draw (0.7,1.3)  node{\includegraphics[width=0.2\linewidth]{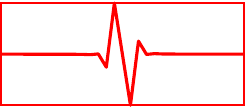}};
			\draw[->,red]  (1.5,1) to[out=90,in=0] (1.2,1.3);
			\draw[red,fill]  (1.5,1) circle(1pt);
			
			\end{tikzpicture}
			\caption{}\label{fig:M2a}
		\end{subfigure}
		\hspace{0.1cm}
		\begin{subfigure}{0.31\linewidth}
			\includegraphics[width=\linewidth]{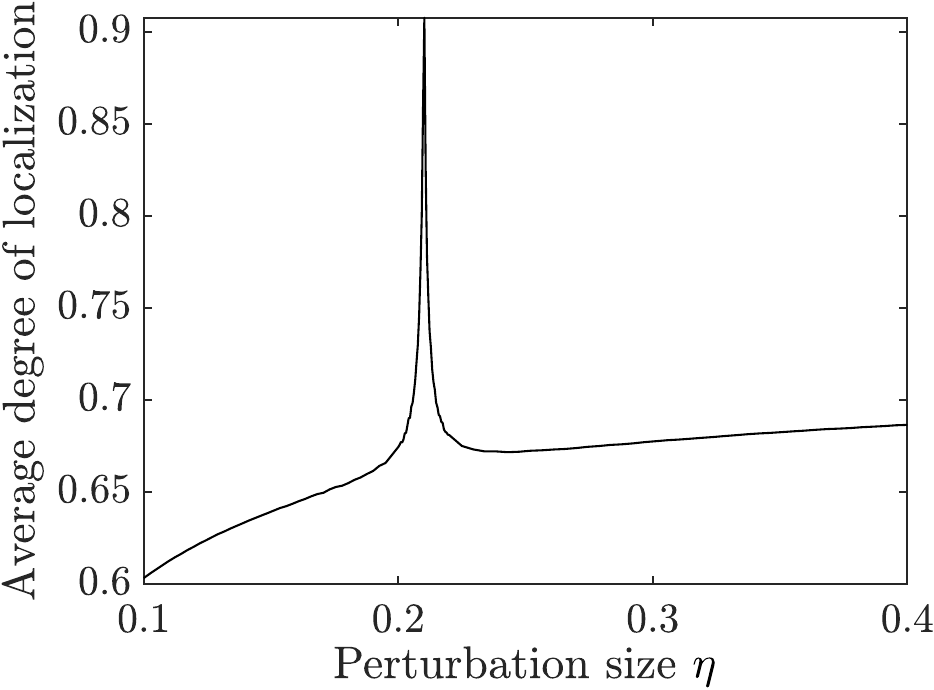}
			\caption{}\label{fig:M2b}
		\end{subfigure}
		\hspace{0.1cm}
		\begin{subfigure}{0.31\linewidth}
			\includegraphics[width=\linewidth]{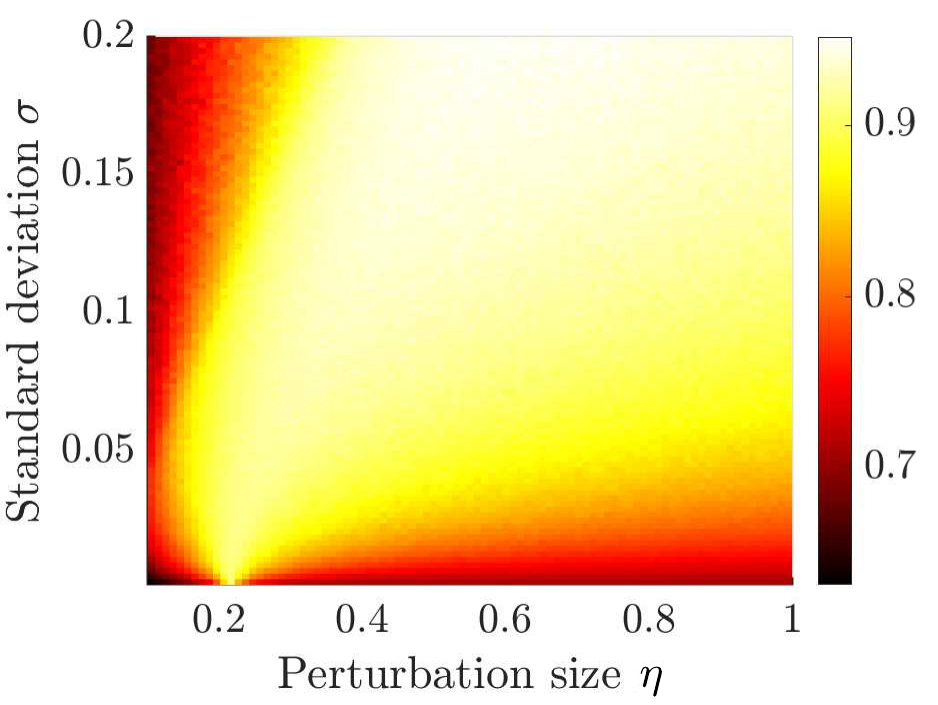}
			\caption{}\label{fig:M2c}
		\end{subfigure}
		\caption{Localized frequencies for two defects, for the case of separation distance $M=2$. (a) Solutions $\omega$ of \eqref{eq:2pt}, in the case of identical defects $\eta_1=\eta_2=\eta$, as function of the perturbation size $\eta$. The two curves intersect around $\eta\approx 0.21$. The symmetries of the eigenmodes (shown as inlays) are opposite on either side of the intersection. (b) Localization length $l(u)$ of the two modes, with parameters drawn independently from $\U[\eta-\sqrt{3}\sigma, \eta+\sqrt{3}\sigma]$ for $\sigma = 10^{-4}$. At the intersection, the monopole/dipole modes decouple, resulting in a sharp peak of $l(u)$. (c) Average degree of localization for different $\eta$ and $\sigma$. The peak seen in \textit{(b)} becomes wider for larger $\sigma$, giving rise to a region in the $\eta$-$\sigma$-plane with high degree of localization.} \label{fig:trans}
	\end{figure}

\subsection{Many defects}	

The motivation for the above analyses of systems with one or two local defects is to be able to infer properties of fully random systems, which will be studied in \Cref{sec:fullyrandom}. To further this analysis, we again assume that $N=1$ and consider a system with $N_d$ independent defects. That is, we assume that 

\begin{equation}
	b_1^m = 1, \ m\notin \{0,M,2M,\dots,N_dM\}\quad\text{and}\quad b_1^{kM} = 1 + \eta_{k}, \ k\in \{0,1,2,\dots,N_d\},
\end{equation}
for some parameters $\eta_1,\dots,\eta_{N_d}\neq 0$. In this case, we can use the Toeplitz matrix formulation from before to see that the localized modes are characterized by the equation
\begin{equation} \label{eq:generalN}
\det\left(I-\mathcal{A}_{\mathcal{HT}}(\omega) \right) = 0,
\end{equation}
where $\mathcal{A}_{\mathcal{HT}}(\omega) $ is the $N_d\times N_d$ matrix given by
\begin{equation*}
\mathcal{A}_{\mathcal{HT}}(\omega) _{ij}=\eta_i T^{M(j-i)}(\omega)\quad\text{with}\quad T^m = -\frac{1}{|Y^*|}\int_{Y^*} \frac{\lambda_1^\alpha e^{\iu \alpha m}}{\lambda_1^\alpha - \omega^2} \dx \alpha.
\end{equation*}
We can solve the system \eqref{eq:generalN} to find the localized modes for a system of $N_d$ defects. In \Cref{fig:manydefects} we show how the number of localized modes varies as the number of defects $N_d$ increases. The number of localized modes is normalized by dividing by the number of defects $N_d$, to give a measure of the density of localized modes (the number of defects is proportional to the length of the part of the resonator chain that contains the defects, which has length $N_dM$). Different values of $M$ and $\sigma$ are compared. In all cases, we observe that as $N_d$ becomes large the system appears to converge to a state whereby between $30\%$ and $50\%$ of the modes are localized.

	\begin{figure}
    	\centering
	\includegraphics[width=0.55\linewidth]{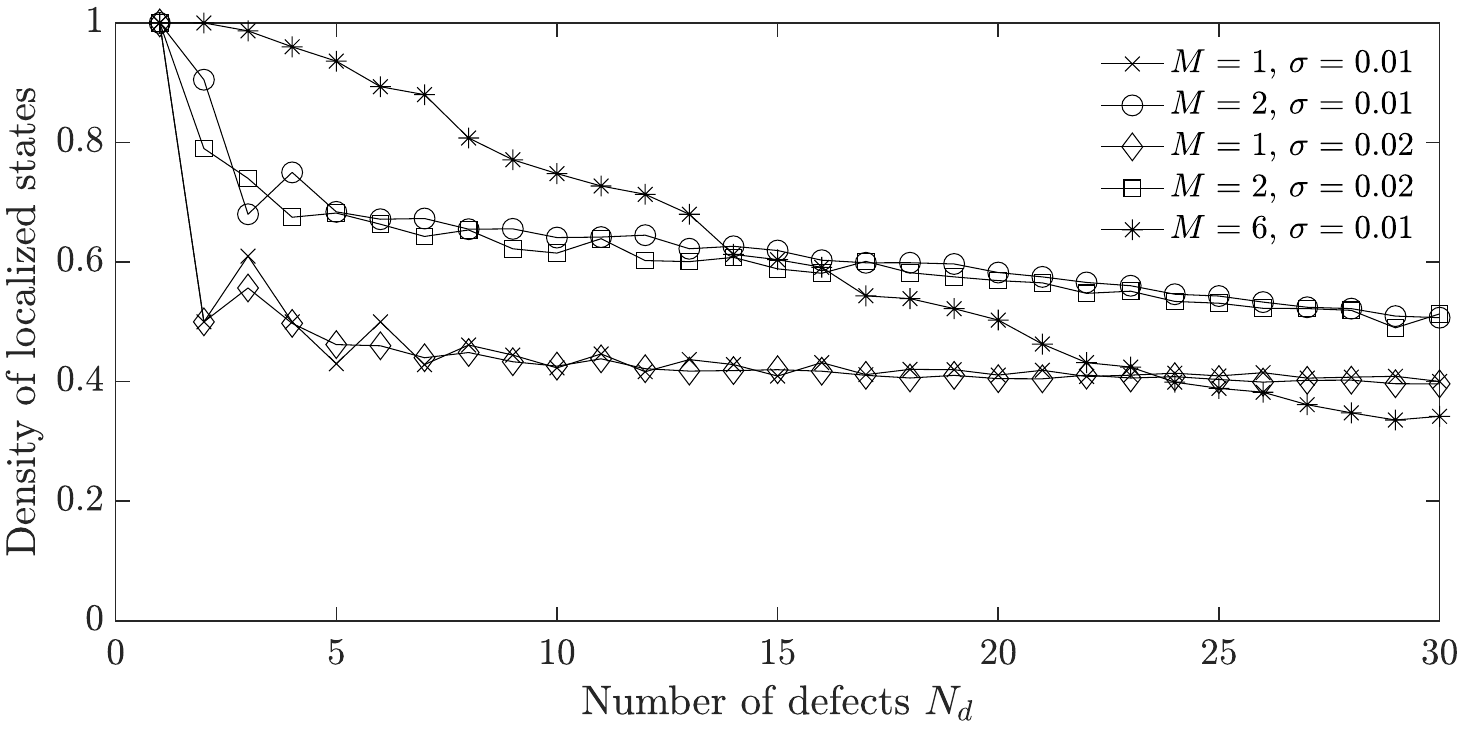}
    	\caption{Localized modes for increasingly many defects. We plot the density of licalized modes, given by the average number of localized modes (averaged over 100 independent random realizations) divided by the number of defects $N_d$. The defects are drawn from a uniform $U[0.05-\sqrt{3}\sigma,0.05+\sqrt{3}\sigma]$ distribution and have separation distance $M$. We show how the density of licalized modes varies as a function of the number of defects $N_d$ for several different values of $M$ and $\sigma$.} \label{fig:manydefects}
    	\end{figure}

	\section{Fully random parameter distribution} \label{sec:fullyrandom}
	Here we adopt the setting of \Cref{sec:defect} but additionally assume that all parameters $b_i^m$ are randomly distributed. We assume that
	$$b_i^m = 1 + \sigma \eta_i^m,\qquad \Bf = I + \sigma \Hf,$$
	where $\Hf$ is a diagonal operator. Here, we take $\eta_i^m$ to be independent and uniformly distributed on $[-\sqrt{3},\sqrt{3}]$. In order to guarantee that the material parameters remain positive, we require $\sigma < \frac{1}{\sqrt 3}$.
		
	Localized modes are characterized by the equation
	\begin{equation} \label{eq:C} 
		\left( \Cf + \sigma \Hf\Cf \right)\uf = \omega^2\uf. 
	\end{equation}
	In other words, we have characterized the problem in terms of the spectral problem for a random perturbation of a discrete operator. We emphasize that the unperturbed part of the operator, namely $\Cf$, is a dense matrix, whose off-diagonal terms describe long-range interactions with slow decay like $1/|i-j|$. Additionally, the perturbation $\sigma \Hf\Cf$ is also a dense matrix, containing random perturbations to both the diagonal- and off-diagonal terms. These two properties make the problem considerably more challenging than Anderson's original model, and precise statements on the spectrum of $\Cf + \sigma \Hf\Cf$ are generally beyond reach.

\subsection{Band structure and emergence of localized modes}

In \Cref{fig:dos}, we show the numerically computed density of states, along with the average degree of localization. Here, we take a chain of equally spaced resonators, which in the case $\sigma = 0$ has a single spectral band. Localized modes emerge as $\sigma$ becomes nonzero, whose frequencies lie around the edge of the spectral band. Increasing $\sigma$ leads to more localized modes and a stronger degree of localization. 
	
Next, we turn to the case $N=2$. Here, we take a chain of ``dimers'', \textit{i.e.} a chain of resonators with alternating, unequal, separation distance, as depicted in \Cref{fig:arraysketch}(d). In the case $\sigma = 0$, which was studied extensively in \cite{ammari2020topologically}, we have two spectral bands separated by a band gap. For small but nonzero $\sigma$, the band gap persists and localized modes emerge at either edge of the bands. Increasing $\sigma$ further, more modes become localized and the band-gap structure disappears.

	\begin{figure}
	\begin{center}
		\begin{subfigure}[b]{0.31\linewidth}
			\centering
			\includegraphics[width=\linewidth]{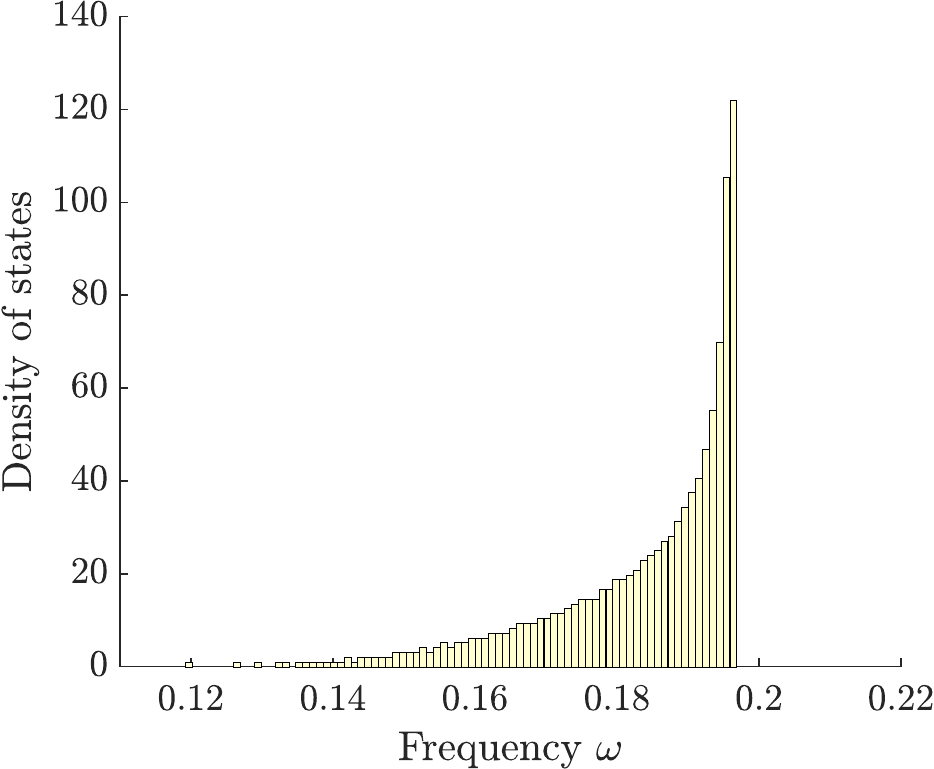}
			\caption{$\sigma = 0$}
		\end{subfigure}\hfill
		\begin{subfigure}[b]{0.31\linewidth}
			\centering
			\includegraphics[width=\linewidth]{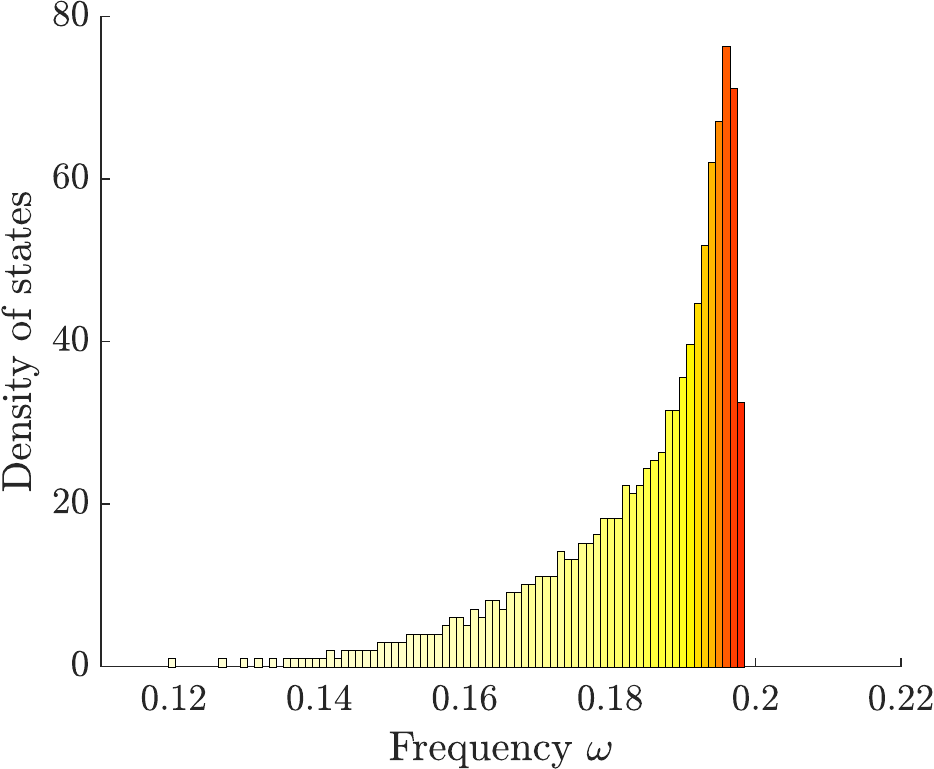}
			\caption{$\sigma = 0.02$.}
		\end{subfigure}\hfill
		\begin{subfigure}[b]{0.31\linewidth}
			\centering
			\includegraphics[width=\linewidth]{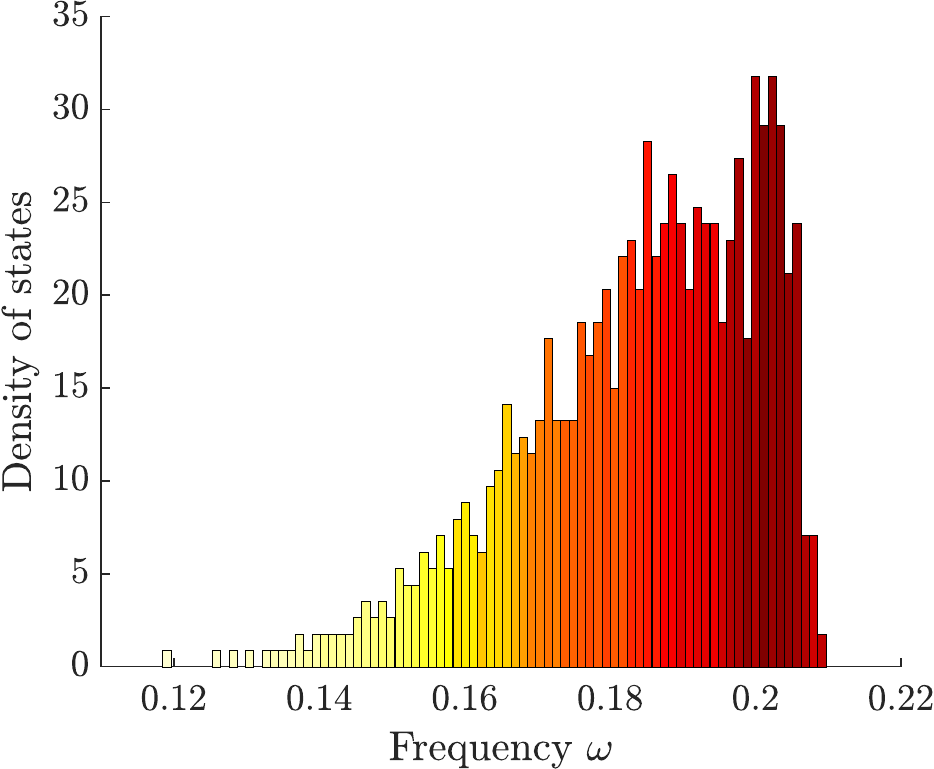}
			\caption{$\sigma = 0.1$.}
		\end{subfigure}
		\begin{subfigure}[b]{0.028\linewidth}
			\centering
			\includegraphics[width=\linewidth]{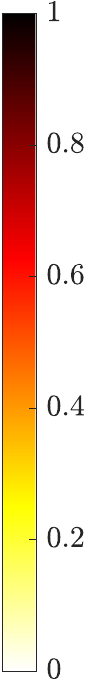}
			\vspace{0.6cm}
		\end{subfigure}
	\end{center}
	\caption{The density of states in a chain of equally spaced resonators with uniformly distributed random parameters. The colour of each bar in the histograms shows the average degree of localization $l(u) = \|u\|_\infty/\|u\|_2$ for the modes within that interval. A large value of $l(u)$ corresponds to a strongly localized mode $u$. In this case, there is a single spectral band, between $0$ and approximately $0.2$, when $\sigma = 0$ (which contains modes that are not localized). For small but nonzero $\sigma$, localized modes emerge at the edge of this band. The number of localized modes increase when $\sigma$ increases further. Here shown for a single realization of a finite chain of $N=1000$ resonators.} \label{fig:dos}
\end{figure}

	\begin{figure}
		\begin{center}
			\begin{subfigure}[b]{0.31\linewidth}
				\centering
				\includegraphics[width=\linewidth]{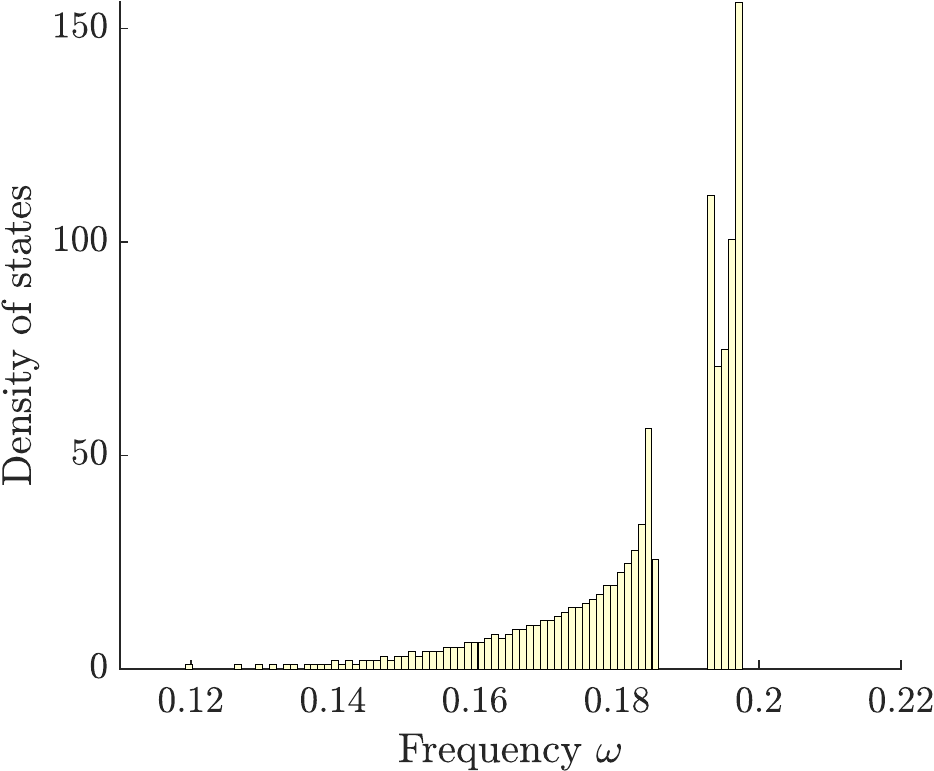}
				\caption{$\sigma = 0$}
			\end{subfigure}\hfill
			\begin{subfigure}[b]{0.31\linewidth}
				\centering
				\includegraphics[width=\linewidth]{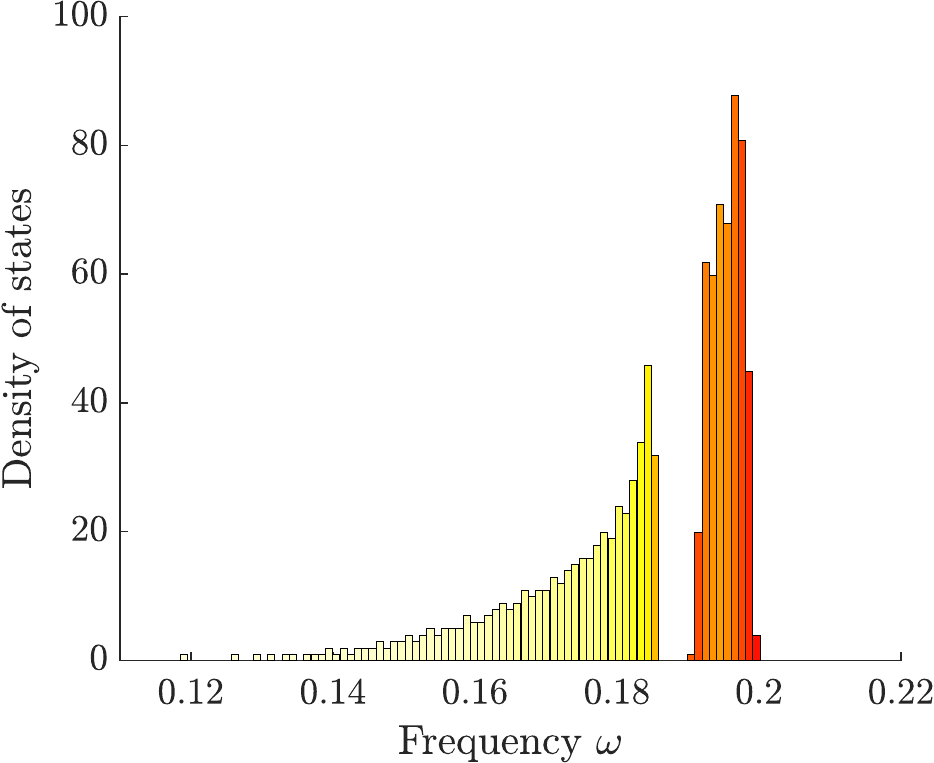}
				\caption{$\sigma =0.02$.}
			\end{subfigure}\hfill
		\begin{subfigure}[b]{0.31\linewidth}
		\centering
		\includegraphics[width=\linewidth]{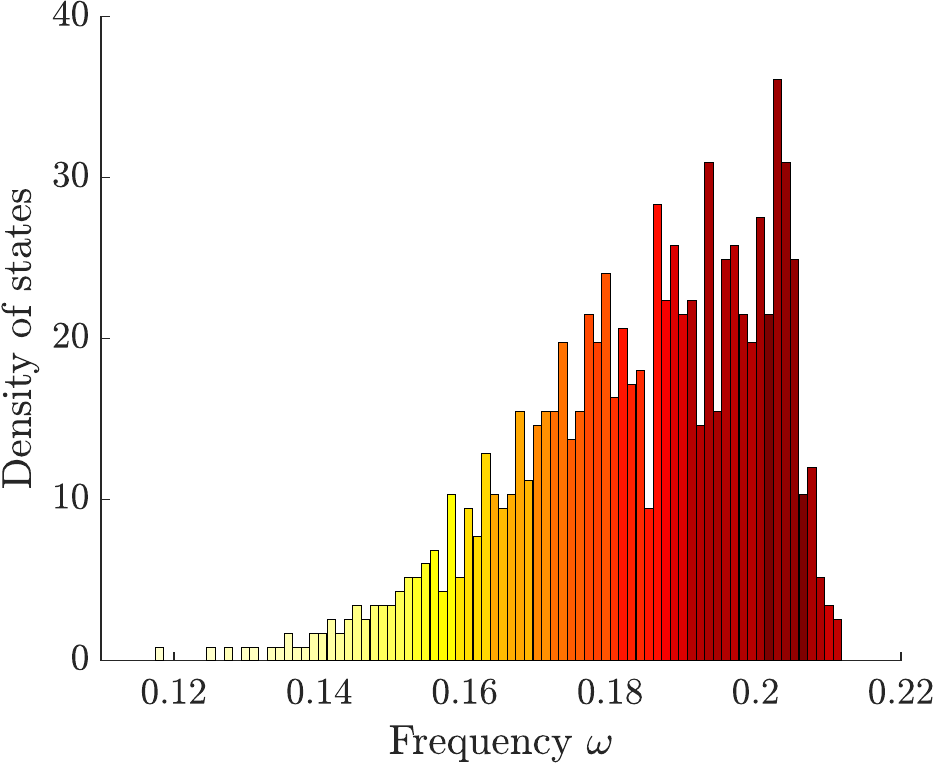}
		\caption{$\sigma = 0.1$.}
	\end{subfigure}
		\begin{subfigure}[b]{0.028\linewidth}
			\centering
			\includegraphics[width=\linewidth]{hot-eps-converted-to.pdf}
			\vspace{0.6cm}
		\end{subfigure}
		\end{center}
		\caption{The density of states in a chain of resonator dimers with uniformly distributed random parameters. The colour of each bar in the histograms shows the (normalized) average degree of localization $l(u)$. The case $\sigma=0$ exhibits two spectral bands, separated by a band gap (which both contain modes that are not localized). For small but nonzero $\sigma$, localized modes emerge around the edges of the bands: above the first band, and below and above the second band. When $\sigma$ increases, the band gap ultimately vanishes and the density of localized modes increases across the spectrum. Here, this is shown for a single realization of a finite chain of $N=1000$ resonators.} \label{fig:dos2}
	\end{figure}

\begin{figure}
	\centering
	\begin{subfigure}{0.31\linewidth}
	\includegraphics[width=\linewidth]{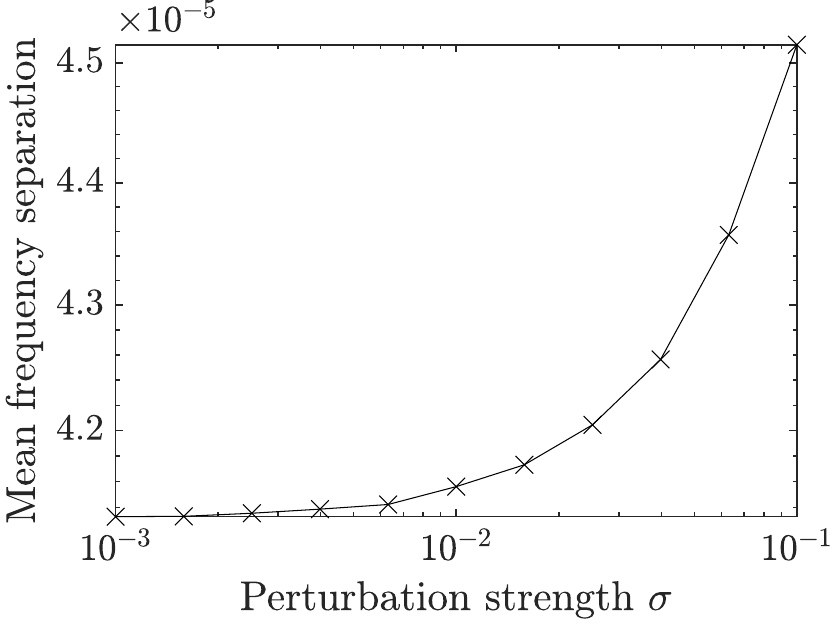}
	\caption{}
	\end{subfigure}
	\hspace{0.1cm}
	\begin{subfigure}{0.31\linewidth}
	\includegraphics[width=\linewidth]{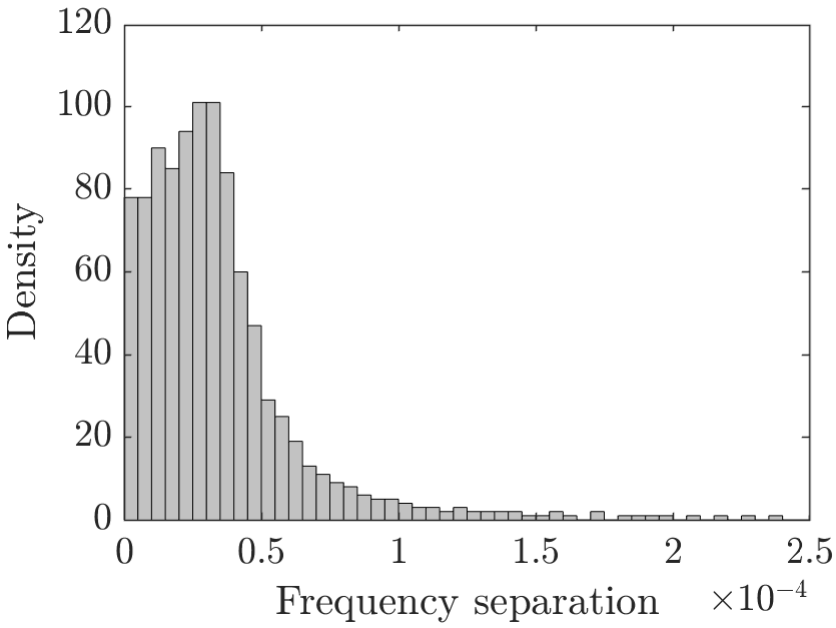}
	\caption{$\sigma=0.001$}
	\end{subfigure}
	\hspace{0.1cm}
	\begin{subfigure}{0.31\linewidth}
	\includegraphics[width=\linewidth]{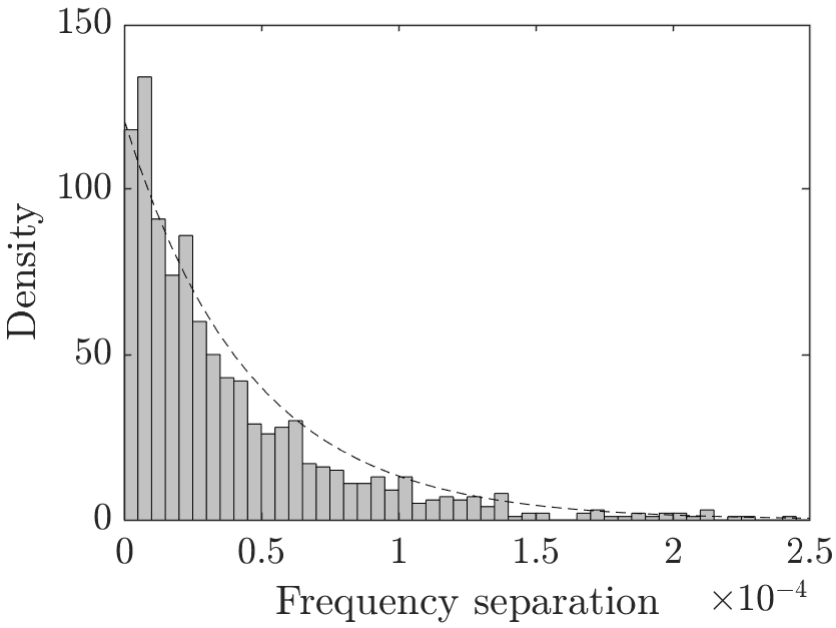}
	\caption{$\sigma=0.1$}
	\end{subfigure}
	\caption{Level repulsion in a finite chain of $N=1000$ resonators with random perturbations. (a) The average nearest-neighbour separation between the (real parts of the) resonant frequencies for different perturbation strengths $\sigma$, which is the standard deviation of the uniform random variables. (b) and (c) are histograms showing the distributions of the normalized frequency separations (frequency separation divided by mean frequency separation) for $\sigma=10^{-3}$ and $\sigma=10^{-1}$, respectively. A distribution of the form $\exp(-x)$ is shown with a dashed line on (c), for comparison.} \label{fig:separation}
\end{figure}

\subsection{Level repulsion in the fully random case}
Many of the arguments presented in this work involve the principle of level repulsion: that eigenvalues would tend to separate when increasingly large random errors are added. This was demonstrated in the double defect case in \Cref{fig:tworandom}(b). We can also study the level repulsion in a finite chain of equally spaced resonators. In particular, \Cref{fig:separation}(a) shows that the average separation increases as the perturbation strength $\sigma$ increases. We also show the distributions of the separations for two specific values of $\sigma$ in Figures~\ref{fig:separation}(b) and \ref{fig:separation}(c). In Figure~\ref{fig:separation}(b), the perturbations are so small that this distribution is (visually) the same as that of the unperturbed structure. In Figure~\ref{fig:separation}(c), the larger perturbations have affected the shape of the distribution. A distribution of the form $\exp(-x)$ is shown, for comparison, which is also seen in other randomly perturbed spectral problems (see \emph{e.g.} Figures~2~and~3 of \cite{brody1981random}). It is worth noting that much of the standard theory of eigenvalues of random matrices is not useful in this case. First, we are making a very specific choice of unperturbed structure, which is periodic and has spectral structure (which is the main factor determining the distribution in \Cref{fig:separation}(b)). Additionally, the random perturbations are encoded in the diagonal matrix $\Hf$, such that the entries of $\Cf + \sigma \Hf\Cf$ are not independent from one another (entries in the same row have non-zero covariance).

	\section{Conclusions}
	
	This work casts new light on the phenomenon of Anderson localization in systems with long-range interactions. Leveraging recent breakthroughs in high-contrast coupled-resonator theory, we have framed the key properties of this exotic phenomenon in terms of simple formulas and elementary physical principles.
	
	Using the generalized capacitance formulation, we have described Anderson localization in systems of high-contrast subwavelength resonators. Starting from first principles, we derived a discrete asymptotic approximation of the resonant states of the high-contrast Helmholtz problem. This takes the form of a discrete operator with long-range off-diagonal terms and a multiplicative perturbation factor. This can be reduced to a simple linear system in the case of compact defects, yielding exact formulas which reveal the fundamental mechanisms of subwavelength localization and explain the phenomena observed in random systems. Specifically, we characterized the localization in terms of level repulsion, which causes the eigenfrequencies to spread and corresponding modes to become localized. Additionally, we were able to identify a phase transition as an eigenmode swapping event and show that, at this phase transition, strong localization is achievable with an exceptionally small random perturbation. Finally, we studied fully random systems, where localized modes emerge around any edge of the band functions of the unperturbed structure and have properties that can be understood using our earlier results on compact defects.

	\section*{Acknowledgements}
	The work of HA was partly funded by the Swiss National Science Foundation grant number 200021--200307. The work of BD was partly funded by the EC-H2020 FETOpen project BOHEME under grant agreement 863179. The work of EOH was partially supported by the Simons Foundation Award No. 663281 granted to the Institute of Mathematics of the Polish Academy of Sciences for the years 2021-2023.
	
	\section*{Data availability}
	
	The code used in this study is available at \url{https://doi.org/10.5281/zenodo.6577767}. No datasets were generated or analysed during the current study.

	\bibliographystyle{abbrv}
	\bibliography{anderson}{}

\begin{thebibliography}{10}

\bibitem{aizenman1993localization}
M.~Aizenman and S.~Molchanov.
\newblock Localization at large disorder and at extreme energies: An elementary
  derivation.
\newblock {\em Commun. Math. Phys.}, 157(2):245--278, 1993.

\bibitem{davies2020hopf}
H.~Ammari and B.~Davies.
\newblock Mimicking the active cochlea with a fluid-coupled array of
  subwavelength {Hopf} resonators.
\newblock {\em Proc. R. Soc. A}, 476(2234):20190870, 2020.

\bibitem{ammari2021functional}
H.~Ammari, B.~Davies, and E.~O. Hiltunen.
\newblock Functional analytic methods for discrete approximations of
  subwavelength resonator systems.
\newblock {\em arXiv preprint arXiv:2106.12301}, 2021.

\bibitem{edge_mode}
H.~Ammari, B.~Davies, and E.~O. Hiltunen.
\newblock Robust edge modes in dislocated systems of subwavelength resonators.
\newblock {\em J. London Math. Soc.}, 106(3):2075--2135, 2022.

\bibitem{ammari2020topologically}
H.~Ammari, B.~Davies, E.~O. Hiltunen, and S.~Yu.
\newblock Topologically protected edge modes in one-dimensional chains of
  subwavelength resonators.
\newblock {\em J. Math. Pure. Appl.}, 144:17--49, 2020.

\bibitem{ammari2021validity}
H.~Ammari, F.~Fiorani, and E.~O. Hiltunen.
\newblock On the validity of the tight-binding method for describing systems of
  subwavelength resonators.
\newblock {\em SIAM J. Appl. Math.}, 82(4):1611--1634, 2022.

\bibitem{pt_defect}
H.~Ammari, B.~Fitzpatrick, E.~O. Hiltunen, and S.~Yu.
\newblock Subwavelength localized modes for acoustic waves in bubbly crystals
  with a defect.
\newblock {\em SIAM J. Appl. Math.}, 78(6):3316--3335, 2018.

\bibitem{ammari2017BGopening}
H.~Ammari, B.~Fitzpatrick, H.~Lee, S.~Yu, and H.~Zhang.
\newblock Subwavelength phononic bandgap opening in bubbly media.
\newblock {\em J. Differ. Equations}, 263(9):5610--5629, 2017.

\bibitem{pt_topological}
H.~Ammari and E.~O. Hiltunen.
\newblock Edge modes in active systems of subwavelength resonators.
\newblock {\em arXiv preprint arXiv:2006.05719}, 2020.

\bibitem{ln_defect}
H.~Ammari, E.~O. Hiltunen, and S.~Yu.
\newblock Subwavelength guided modes for acoustic waves in bubbly crystals with
  a line defect.
\newblock {\em J. Eur. Math. Soc.}, 24(7):2279--2313, 2022.

\bibitem{anderson1958absence}
P.~W. Anderson.
\newblock Absence of diffusion in certain random lattices.
\newblock {\em Phys. Rev.}, 109(5):1492, 1958.

\bibitem{billy2008direct}
J.~Billy, V.~Josse, Z.~Zuo, A.~Bernard, B.~Hambrecht, P.~Lugan, D.~Cl{\'e}ment,
  L.~Sanchez-Palencia, P.~Bouyer, and A.~Aspect.
\newblock Direct observation of {Anderson} localization of matter waves in a
  controlled disorder.
\newblock {\em Nature}, 453(7197):891--894, 2008.

\bibitem{brody1981random}
T.~A. Brody, J.~Flores, J.~B. French, P.~A. Mello, A.~Pandey, and S.~S.~M.
  Wong.
\newblock Random-matrix physics: spectrum and strength fluctuations.
\newblock {\em Rev. Mod. Phys.}, 53(3):385, 1981.

\bibitem{carmona1982exponential}
R.~Carmona.
\newblock Exponential localization in one dimensional disordered systems.
\newblock {\em Duke Math. J.}, 49(1):191--213, 1982.

\bibitem{crane2017anderson}
T.~Crane, O.~J. Trojak, J.~P. Vasco, S.~Hughes, and L.~Sapienza.
\newblock Anderson localization of visible light on a nanophotonic chip.
\newblock {\em ACS Photonics}, 4(9):2274--2280, 2017.

\bibitem{davies2023landscape}
B.~Davies and Y.~Lou.
\newblock Landscape of wave localisation at low frequencies.
\newblock {\em arXiv preprint arXiv:2307.06729}, 2023.

\bibitem{figotin1994localization}
A.~Figotin and A.~Klein.
\newblock Localization of electromagnetic and acoustic waves in random media.
  lattice models.
\newblock {\em J. Stat. Phys.}, 76(3):985--1003, 1994.

\bibitem{filoche2012universal}
M.~Filoche and S.~Mayboroda.
\newblock Universal mechanism for {Anderson} and weak localization.
\newblock {\em Proc. Natl. Acad. Sci. USA}, 109(37):14761--14766, 2012.

\bibitem{fouque2007wave}
J.-P. Fouque, J.~Garnier, G.~Papanicolaou, and K.~Solna.
\newblock {\em Wave Propagation and Time Reversal in Randomly Layered Media},
  volume~56 of {\em Stochastic Modelling and Applied Probability}.
\newblock Springer Science \& Business Media, 2007.

\bibitem{frohlich1985constructive}
J.~Fr{\"o}hlich, F.~Martinelli, E.~Scoppola, and T.~Spencer.
\newblock Constructive proof of localization in the {Anderson} tight binding
  model.
\newblock {\em Commun. Math. Phys.}, 101(1):21--46, 1985.

\bibitem{frohlich1983absence}
J.~Fr{\"o}hlich and T.~Spencer.
\newblock Absence of diffusion in the {Anderson} tight binding model for large
  disorder or low energy.
\newblock {\em Commun. Math. Phys.}, 88(2):151--184, 1983.

\bibitem{herzig2016interplay}
H.~Herzig~Sheinfux, I.~Kaminer, A.~Z. Genack, and M.~Segev.
\newblock Interplay between evanescence and disorder in deep subwavelength
  photonic structures.
\newblock {\em Nat. Commun.}, 7(1):1--9, 2016.

\bibitem{ishimaru1978wave}
A.~Ishimaru.
\newblock {\em Wave Propagation and Scattering in Random Media}.
\newblock Academic Press New York, 1978.

\bibitem{jenkins2018strong}
S.~D. Jenkins, N.~Papasimakis, S.~Savo, N.~I. Zheludev, and J.~Ruostekoski.
\newblock Strong interactions and subradiance in disordered metamaterials.
\newblock {\em Phys. Rev. B}, 98(24):245136, 2018.

\bibitem{kuchment2016overview}
P.~Kuchment.
\newblock An overview of periodic elliptic operators.
\newblock {\em Bull. Am. Math. Soc.}, 53(3):343--414, 2016.

\bibitem{lagendijk2009fifty}
A.~Lagendijk, B.~Van~Tiggelen, and D.~S. Wiersma.
\newblock Fifty years of {Anderson} localization.
\newblock {\em Phys. Today}, 62(8):24--29, 2009.

\bibitem{mehta2004random}
M.~L. Mehta.
\newblock {\em Random Matrices and the Statistical Theory of Energy Levels}.
\newblock Academic Press, Boston, 1967.

\bibitem{minnaert1933musical}
M.~Minnaert.
\newblock On musical air-bubbles and the sounds of running water.
\newblock {\em Philos. Mag.}, 16(104):235--248, 1933.

\bibitem{nandkishore2017many}
R.~M. Nandkishore and S.~L. Sondhi.
\newblock Many-body localization with long-range interactions.
\newblock {\em Phys. Rev. X}, 7(4):041021, 2017.

\bibitem{segev2013anderson}
M.~Segev, Y.~Silberberg, and D.~N. Christodoulides.
\newblock Anderson localization of light.
\newblock {\em Nat. Photonics}, 7(3):197--204, 2013.

\bibitem{sheinfux2017observation}
H.~H. Sheinfux, Y.~Lumer, G.~Ankonina, A.~Z. Genack, G.~Bartal, and M.~Segev.
\newblock Observation of {Anderson} localization in disordered nanophotonic
  structures.
\newblock {\em Science}, 356(6341):953--956, 2017.

\bibitem{torres2019level}
E.~J. Torres-Herrera, J.~M{\'e}ndez-Berm{\'u}dez, and L.~F. Santos.
\newblock Level repulsion and dynamics in the finite one-dimensional {Anderson}
  model.
\newblock {\em Phys. Rev. E}, 100(2):022142, 2019.

\end{thebibliography}
	\end{document}